\RequirePackage[T1]{fontenc}
\documentclass[10pt,twocolumn,twoside]{IEEEtran}
\usepackage{amsmath,amssymb,amsthm,mathtools,mathrsfs,bm}
\usepackage{graphicx,booktabs,array,microtype}
\usepackage[hidelinks]{hyperref}
\usepackage[nameinlink,noabbrev]{cleveref}
\newtheorem{theorem}{Theorem}
\newtheorem{corollary}[theorem]{Corollary}
\newcommand{\R}{\mathbb R}
\newcommand{\N}{\mathbb N}
\newcommand{\one}{\mathbf 1}
\newcommand{\Prob}{\mathbb P}

\newcommand{\diag}{\operatorname{diag}}
\newcommand{\col}{\operatorname{col}}

\newcommand{\diamP}{\operatorname{diam}_P}
\newcommand{\Arg}{\operatorname{Arg}}
\allowdisplaybreaks[1]
\title{Synchronization over Resource-Causal Temporal Networks with Consumable Entanglement Links}
\author{Ruiwu Niu, Xiaoqun Wu, Guanrong Chen, and Micha\"el Antonie van Wyk%
\thanks{R. Niu is with the Department of Data Science and Digital Innovation, Hong Kong Shue Yan University, Hong Kong SAR, China (corresponding author, e-mail: rniu@hksyu.edu).}%
\thanks{X. Wu is with the College of Computer Science and Software Engineering, Shenzhen University, Shenzhen, China (e-mail: xqwu@szu.edu.cn).}%
\thanks{G. Chen is with the Department of Electrical Engineering, City University of Hong Kong, Kowloon, Hong Kong SAR, China (e-mail: eegchen@cityu.edu.hk).}%
\thanks{M. A. van Wyk is with the School of Electrical and Information Engineering, University of the Witwatersrand, Johannesburg, South Africa.}}
\begin{document}
\maketitle
\begin{abstract}
When communication consumes shared entanglement, a physical link does not guarantee that an interaction is available. We study synchronization of classical continuous-time systems whose interactions consume randomly replenished resources stored on edges. Under isotropic tangent coupling and state independent scheduling, the linearized evolution separates exactly into intrinsic node growth and an ordered product of communication matrices. This yields temporal contraction conditions and local nonlinear synchronization. Resource conservation bounds the contraction achievable by every schedule with acyclic event services. On a fixed acyclic orientation with a unique source, serving each ready edge immediately is optimal: the non-source node with the least total incoming supply determines the rate. Balanced incoming supplies attain the bound. For linear coupling on these orientations, a further condition guarantees synchronization throughout a convex invariant domain. In the specified Lorenz domain, this sufficient contraction requirement is about three percent above a necessary requirement for uniform exponential synchronization. Kuramoto simulations illustrate the effect of service direction, while Lorenz simulations compare resource policies and test local growth predictions against finite-time dynamics.
\end{abstract}
\begin{IEEEkeywords}
Network synchronization, temporal networks, communication constraints, entanglement assisted communication, nonlinear systems.
\end{IEEEkeywords}

\section{Introduction}
\label{sec:introduction}
Synchronization describes convergence toward a common motion, governed partly by network structure \cite{PecoraCarroll1998,Arenas2008,WangChen2002}. Temporal interactions can differ from their static aggregate \cite{HolmeSaramaki2012,Moreau2005,Stilwell2006,Ghosh2022}; multilayer and rewiring networks also reveal the importance of interaction organization \cite{Tang2019,Wu2024,Wu2022}. Blinking and random networks admit synchronization under sufficiently fast switching \cite{Belykh2004,Porfiri2008}. Here each interaction consumes a resource that must first be generated and stored.

Quantum communication motivates this setting: transfer protocols consume stored entangled states \cite{Bennett1993,Kimble2008,Wehner2018}. Link protocols turn heralded generation into services \cite{Dahlberg2019}, allocated by routing and scheduling \cite{Pant2019,Vardoyan2021,Cicconetti2021}. Beyond communication rate, latency, and fidelity, a dynamical application asks whether completed services supply enough coupling to synchronize evolving states.

Randomized gossip relates interaction probabilities to averaging rates \cite{Boyd2006}, while consensus admits convergence-time limits for specified algorithm classes \cite{OlshevskyTsitsiklis2009}. Related control problems address sampled communication \cite{Cheng2013}, link failures \cite{Chen2021}, event triggering \cite{Dimarogonas2012,Nowzari2019}, or energy harvesting \cite{Wei2024}. Zhang and Strogatz design temporal networks under a fixed total coupling budget \cite{ZhangStrogatz2021}. In our setting, each edge stores a resource that is randomly replenished and consumed by use; a policy chooses which ready edges serve and in which direction.

Each node's classical state follows an ODE. A completed service initializes the receiver's predictor of the sender's sampled state. Physical support, inventory, and directed services thus form causally dependent layers rather than multiplex state replicas \cite{Kivela2014}. We call this structure a \emph{resource causal temporal network}. Unlike quantum oscillator synchronization \cite{Mari2013,Walter2014}, the synchronized variables are classical.

Resource supply bounds the achievable contraction, while scheduling determines how much of it is realized. Our main results are as follows.
\begin{itemize}
\item \emph{Exact separation.} Under isotropic tangent coupling and state independent scheduling, the transverse linearized evolution factorizes into intrinsic node growth and ordered communication (Theorem~\ref{thm:temporal-contraction}).
\item \emph{Resource limit.} Resource conservation gives an explicit upper bound on asymptotic contraction for every feasible schedule with acyclic services at each event. It holds almost surely, even with advance knowledge of arrivals (Theorem~\ref{thm:resource}(i)).
\item \emph{Optimal scheduling.} Immediate (greedy) service is optimal within a fixed acyclic orientation with one source. Its rate is the effective coupling strength times the minimum total incoming supply among non-source nodes; equal incoming supplies characterize attainment of the bound on the same edges (Theorem~\ref{thm:resource}).
\item \emph{Nonlinear guarantees.} Contraction conditions give local nonlinear synchronization and, under linear coupling on fixed acyclic orientations with a unique source, synchronization throughout a convex invariant domain (Corollary~\ref{cor:nonlinear-synchronization} and Theorem~\ref{thm:nonlinear}).
\end{itemize}

An ordering example and Kuramoto simulations, including the phase order parameter \cite{Niu2018}, distinguish chronology from direction. Lorenz experiments test finite-time policy performance and local growth predictions. Our ideal interface (Section~\ref{sec:entanglement-interpretation}) connects entanglement service allocation to application stability requirements.

Sections~\ref{sec:model}--\ref{sec:resource-policy} develop the theory; Sections~\ref{sec:entanglement-interpretation}--\ref{sec:numerical-results} give its interpretation and experiments. Proofs are in the appendix.

\begin{figure*}[t]
\centering
\includegraphics[width=\textwidth]{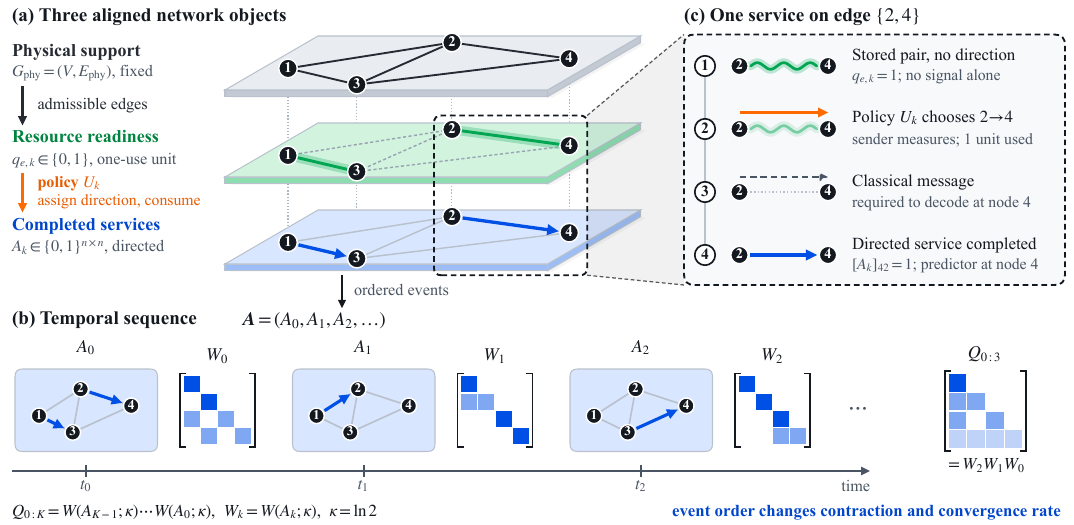}
\caption{Model architecture. (a) Fixed physical support, resource readiness, and completed directional services are aligned but distinct objects; highlighted edges hold a ready unit, and the dashed slice through edge \(\{2,4\}\) is expanded in (c). (b) The completed services form an ordered temporal sequence acting on continuous node dynamics; each event gives the row-stochastic matrix \(W_k=W(A_k;\kappa)\), drawn exactly for \(\kappa=\ln 2\) (white \(0\), blue \(1\)), and their ordered product is \(Q_{0:3}\). (c) A policy assigns a direction, completes a transfer with its classical message, consumes the stored resource, and initializes a receiver predictor.}
\label{fig:model-architecture}
\end{figure*}
\section{Model and synchronization problem}
\label{sec:model}
\subsection{Three aligned network objects}
Let \(G_{\mathrm{phy}}=(V,E_{\mathrm{phy}})\) be a fixed connected undirected graph, with \(V=\{1,\ldots,n\}\) and \(n\ge2\).
Its edges are pairs of endpoints capable of generating and storing a communication resource. Before event \(k\ge0\), the readiness variable \(q_{e,k}\in\{0,1\}\) records whether edge \(e\) holds one qualified unit. The completed service matrix \(A_k\in\{0,1\}^{n\times n}\) has zero diagonal and uses the receiver row convention: \([A_k]_{ij}=1\) means that a service \(j\to i\) completes at that event. Unsupported pairs have zero entries. The realized sequence is \(\boldsymbol A=(A_k)_{k\ge0}\).

A policy consumes ready resources and assigns directions to generate \(A_k\). Readiness does not itself imply a direction. The \emph{consumable entanglement links} are stored service resources; physical channels remain intact after use.
Writing \(U_k\) for the binary vector of directional decisions, \(g_{k+1}\) for new resource arrivals, and \(\mathcal F_k^R\) for the available resource history, the dependencies are
\begin{equation}
\begin{gathered}
G_{\mathrm{phy}}\rightsquigarrow(q_k,U_k,A_k),\qquad
(q_k,\mathcal F_k^R)\to U_k\to A_k,\\
(q_k,U_k,g_{k+1})\to q_{k+1}.
\end{gathered}
\label{eq:causal-chain}
\end{equation}
Here \(\rightsquigarrow\) means that the physical graph constrains which edges can hold resources and provide services. Section~\ref{sec:resource-policy} specifies the resource generator. Policies use resource information, not node states.

\emph{Definition (resource causal temporal network).} The fixed support, inventory process, causal resource policy, and induced service sequence, related by \eqref{eq:causal-chain}, form a resource causal temporal network. This includes the inventory feedback after consumption.

\subsection{Communication events and continuous dynamics}
Let \(\mathcal X\subseteq\R^d\) be open, \(d\ge1\), and \(f:\mathcal X\to\R^d\) be continuously differentiable. Node \(i\) has state \(x_i(t)\in\mathcal X\). Fix an event spacing \(\Delta>0\), times \(t_k=k\Delta\), and intervals \(I_k=[t_k,t_{k+1})\). We first condition on \(\boldsymbol A\). A completed service provides the ideal sample
\begin{equation}
Y_{j\to i,k}=x_j(t_k),\qquad
\widehat y_{j\to i,k}(t_k)=Y_{j\to i,k}.
\label{eq:sample}
\end{equation}
The receiving predictor then follows the isolated dynamics:
\begin{equation}
\dot{\widehat y}_{j\to i,k}=f(\widehat y_{j\to i,k}),\qquad t\in I_k.
\label{eq:predictor-flow}
\end{equation}
Node states remain continuous and satisfy
\begin{equation}
\dot x_i=f(x_i)+\sigma\sum_j[A_k]_{ij}
h_{ij}(x_i,\widehat y_{j\to i,k}),\qquad t\in I_k,
\label{eq:node-ode}
\end{equation}
where \(\sigma>0\) is the coupling gain, each \(h_{ij}:\mathcal X^2\to\R^d\) is continuously differentiable, and \(h_{ij}(z,z)=0\). Predictors evolve from sampled states, not current sender states. Section~\ref{sec:entanglement-interpretation} explains the ideal sample interface and its finite precision boundary.

\subsection{Synchronization quantities and scope}
Let \(\mathscr S\) be a nonempty class of forward complete isolated trajectories \(\dot s=f(s)\). For the exact linear reduction assume, on every supported ordered pair,
\begin{equation}
D_2h_{ij}(s(t),s(t))=bI_d,\qquad b>0,\qquad
\kappa=\sigma b\Delta.
\label{eq:isotropic-tangent}
\end{equation}
Here \(D_2\) differentiates the second argument and \(I_d\) is the identity; \(\sigma b\) is the effective coupling strength. Diagonal consistency gives \(D_1h_{ij}(s,s)=-bI_d\). This isotropic tangent condition means that coupling has the same linearized strength in every state direction, enabling exact stochastic-matrix factorization. General matrix-valued coupling needs a different analysis. The domain theorem further requires \(h_{ij}(x,y)=b(y-x)\).

Fix a symmetric positive definite matrix \(P\in\R^{d\times d}\), and write \(\|v\|_P=(v^{\mathsf T}Pv)^{1/2}\) and \(\|H\|_P=\|P^{1/2}HP^{-1/2}\|_2\), where \(\|\cdot\|_2\) is the Euclidean operator norm. The fundamental matrix \(\Psi_s(t,u)\) of \(\dot\xi=Df(s(t))\xi\), normalized at \(u\), satisfies the assumed uniform bound
\begin{equation}
\begin{gathered}
\|\Psi_s(t,u)\|_P\le M e^{\alpha_0(t-u)},\\
M\ge1,\quad\alpha_0\in\R,\quad t\ge u\ge0.
\end{gathered}
\label{eq:isolated-growth}
\end{equation}
Set
\begin{equation}
F_K=\sup_{s\in\mathscr S}\|\Psi_s(t_K,0)\|_P,\qquad
\lambda_{\mathscr S}=\limsup_{K\to\infty}\frac{\log F_K}{K\Delta}.
\label{eq:isolated-exponent}
\end{equation}
Here \(\alpha_0\) bounds growth over every interval, whereas \(\lambda_{\mathscr S}\le\alpha_0\) describes asymptotic growth.

For \(X=\col(x_1,\ldots,x_n)\), let \(\mathcal D=\{\one_n\otimes z:z\in\R^d\}\) be the synchronization subspace and define
\begin{equation}
d_P(X,\mathcal D)=\min_{z\in\R^d}
\left(\sum_i\|x_i-z\|_P^2\right)^{1/2}.
\label{eq:sync-distance}
\end{equation}
Here \(\col\) stacks vectors, \(\one_n\) is the vector of ones, and \(\otimes\) is the Kronecker product; the minimizer is the node mean. Denote the induced linearized evolution on \(\R^{nd}/\mathcal D\) by \(\mathcal J_{0:K}^{s,\boldsymbol A}\), with its norm induced by \eqref{eq:sync-distance}. Its upper transverse exponent is
\begin{equation}
\Lambda_\perp^{\mathrm{lin}}=
\limsup_{K\to\infty}\frac{1}{K\Delta}
\log\sup_{s\in\mathscr S}\|\mathcal J_{0:K}^{s,\boldsymbol A}\|_{P,\perp}.
\label{eq:linear-exponent}
\end{equation}
A negative value means exponential decay of linearized disagreement. Nonlinear synchronization requires
\begin{equation}
d_P(X(t),\mathcal D)\le C_{\boldsymbol A,\lambda}e^{-\lambda t}
d_P(X(0),\mathcal D),\qquad t\ge0,
\label{eq:nonlinear-target}
\end{equation}
with \(\lambda>0\) and a finite constant independent of initial states in the stated neighborhood or invariant domain, but possibly dependent on the service sequence.

\section{From temporal coupling to synchronization}
\label{sec:temporal-theory}

We first ask whether a given service sequence can synchronize the nodes, without specifying how resources generate it. Exact event matrices separate the contraction supplied by communication from the growth of isolated node differences.

\subsection{Exact reduction and temporal contraction}

For a service matrix \(A\), let \(\ell_i=\sum_j A_{ij}\in\{0,\ldots,n-1\}\) be the number of received samples, written \(\ell_{i,k}\) when \(A=A_k\), and define
\begin{equation}
 W(A;\kappa)_{ij}=\begin{cases}
 e^{-\kappa\ell_i},&i=j,\ \ell_i>0,\\
 (1-e^{-\kappa\ell_i})A_{ij}/\ell_i,&i\ne j,\ \ell_i>0,\\
 1,&i=j,\ \ell_i=0,\\
 0,&\text{otherwise}.
 \label{eq:service-matrix}
\end{cases}
\end{equation}
Its entries are nonnegative and its rows sum to one. Write
\begin{equation}
 Q_{0:K}=W(A_{K-1};\kappa)\cdots W(A_0;\kappa),
 \qquad Q_{0:0}=I_n.
 \label{eq:ordered-factorization}
\end{equation}
The earliest service is on the right. Choose an orthonormal basis \(U\in\R^{n\times(n-1)}\) of \(\one_n^\perp\), and set
\begin{equation}
 \|Q\|_\perp=\|U^{\mathsf T}QU\|_2,\qquad
 \Lambda_Q=\limsup_{K\to\infty}\frac{\log\|Q_{0:K}\|_\perp}{K\Delta}.
 \label{eq:network-exponent}
\end{equation}
The quotient norm is independent of the basis \(U\). We call \(c:=-\Lambda_Q\) the \emph{network contraction rate}: larger \(c\) means faster asymptotic suppression of disagreement by communication. This rate must be compared with the intrinsic growth exponent \(\lambda_{\mathscr S}\) in \eqref{eq:isolated-exponent}. For a row-stochastic matrix \(Q=(q_{ij})\), its Dobrushin coefficient is
\begin{equation}
 \delta(Q)=\frac12\max_{r,\ell}\sum_j|q_{rj}-q_{\ell j}|\in[0,1].
 \label{eq:dobrushin}
\end{equation}
It measures the largest remaining difference between receiving rows and satisfies \(\delta(QR)\le\delta(Q)\delta(R)\) \cite{Seneta2006}. We use \(\log0=-\infty\).

Contraction depends on the ordered product, not just a static aggregate. The finite-window condition below makes this dependence usable for policy design through window duration, contraction strength, and occurrence frequency.

\begin{theorem}[Exact temporal reduction and contraction]
\label{thm:temporal-contraction}
\label{lem:exact-service-matrix}
\label{lem:separation}
Under the isotropic tangent condition \eqref{eq:isotropic-tangent}, fix a service sequence independently of the node states. Then
\begin{equation}
 \sup_{s\in\mathscr S}\|\mathcal J_{0:K}^{s,\boldsymbol A}\|_{P,\perp}
 =\|Q_{0:K}\|_\perp F_K,
 \label{eq:exact-norm}
\end{equation}
and consequently
\begin{equation}
 \Lambda_\perp^{\rm lin}\le\lambda_{\mathscr S}+\Lambda_Q
 \le\alpha_0+\limsup_K\frac{\log\delta(Q_{0:K})}{K\Delta}.
 \label{eq:general-contraction}
\end{equation}
The first inequality is an equality if the normalized logarithm of either factor in \eqref{eq:exact-norm} has a finite limit.

More explicitly, fix a block length \(L\in\N\) and define
\begin{equation}
 B_r=W(A_{(r+1)L-1};\kappa)\cdots W(A_{rL};\kappa).
 \label{eq:block-product}
\end{equation}
If \(\gamma\in[0,1)\), \(\nu\in(0,1]\), and
\begin{equation}
 \liminf_{N\to\infty}\frac1N\sum_{r=0}^{N-1}
 \mathbf1\{\delta(B_r)\le\gamma\}\ge\nu,
 \label{eq:block-frequency}
\end{equation}
then
\begin{equation}
 \Lambda_\perp^{\rm lin}\le\alpha_0+\frac{\nu\log\gamma}{L\Delta}.
 \label{eq:frequency-bound}
\end{equation}
Here \(\mathbf1\{\cdot\}\) denotes the indicator of an event.
\end{theorem}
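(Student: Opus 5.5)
The plan is to compute the transverse variational flow interval by interval and show that, after factoring out the isolated fundamental matrix, it is exactly the linear averaging recursion whose one-step matrix is \(W(A_k;\kappa)\).

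\emph{Step 1: linearization on one interval.} Fix \(s\in\mathscr S\) and a state-independent sequence \(\boldsymbol A\). Linearize \eqref{eq:node-ode}, \eqref{eq:predictor-flow} about the synchronous solution \(x_i=\widehat y_{j\to i,k}=s\). On \(I_k\) the node variations \(\xi_i\) and predictor variations \(\eta_{j\to i}\) satisfy
\[
\dot\eta_{j\to i}=Df(s)\eta_{j\to i},\qquad \eta_{j\to i}(t_k)=\xi_j(t_k),
\]
\[
\dot\xi_i=Df(s)\xi_i+\sigma\sum_j[A_k]_{ij}\bigl(D_1h_{ij}\,\xi_i+D_2h_{ij}\,\eta_{j\to i}\bigr).
\]
By \eqref{eq:isotropic-tangent} we have \(D_2h_{ij}=bI_d\) and \(D_1h_{ij}=-bI_d\).

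\emph{Step 2: factor out the isolated flow.} Substitute \(\xi_i(t)=\Psi_s(t,t_k)z_i(t)\). Then \(\eta_{j\to i}(t)=\Psi_s(t,t_k)z_j(t_k)\), and the isotropy makes every coupling term commute with \(\Psi_s\). This gives
\[
\dot z_i=\sigma b\sum_j[A_k]_{ij}\bigl(z_j(t_k)-z_i\bigr),
\]
which is a scalar-rate linear ODE with a frozen input. Solving it over a length \(\Delta\) gives
\[
z_i(t_{k+1})=e^{-\kappa\ell_{i,k}}z_i(t_k)+\bigl(1-e^{-\kappa\ell_{i,k}}\bigr)\ell_{i,k}^{-1}\sum_j[A_k]_{ij}z_j(t_k),
\]
with \(z_i(t_{k+1})=z_i(t_k)\) when \(\ell_{i,k}=0\). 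This is exactly \(W(A_k;\kappa)\) from \eqref{eq:service-matrix}. Composing intervals via the cocycle property of \(\Psi_s\) yields
\[
\Xi(t_K)=\bigl(Q_{0:K}\otimes\Psi_s(t_K,0)\bigr)\Xi(0).
\]

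\emph{Step 3: the quotient norm and \eqref{eq:exact-norm}.} Since \(Q_{0:K}\one_n=\one_n\), the product maps \(\mathcal D\) into itself. Writing \(\Pi=UU^{\mathsf T}\), we have \(\Pi Q\Pi=\Pi Q\) on the quotient. Hence the induced map on \(\R^{nd}/\mathcal D\), measured in the norm \eqref{eq:sync-distance} and conjugated by \(I_n\otimes P^{1/2}\), is
\[
(U^{\mathsf T}Q_{0:K}U)\otimes\bigl(P^{1/2}\Psi_s P^{-1/2}\bigr).
\]
The spectral norm of a Kronecker product is the product of the spectral norms, so its norm is \(\|Q_{0:K}\|_\perp\|\Psi_s(t_K,0)\|_P\). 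Taking the supremum over \(s\) gives \eqref{eq:exact-norm}.

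\emph{Step 4: the exponent bounds in \eqref{eq:general-contraction}.} Taking logarithms and dividing by \(K\Delta\), the first inequality follows because the limsup of a sum is at most the sum of the limsups. It is an equality when one of the two terms actually converges to a finite limit.

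For the second inequality, use \(F_K\le Me^{\alpha_0K\Delta}\) from \eqref{eq:isolated-growth}, which gives \(\lambda_{\mathscr S}\le\alpha_0\). It remains to show \(\|Q\|_\perp\le c_n\delta(Q)\) for row-stochastic \(Q\), with a constant \(c_n\) depending only on \(n\).
\begin{itemize}
\item For \(v\perp\one_n\) with \(\|v\|_2=1\), each difference satisfies \(|(Qv)_r-(Qv)_\ell|\le\delta(Q)(\max v-\min v)\le2\delta(Q)\).
\item The centered component of \(Qv\) therefore has every entry bounded by \(2\delta(Q)\), so its norm is at most \(2\sqrt n\,\delta(Q)\).
\end{itemize}
The constant \(c_n\) disappears after normalizing by \(K\Delta\) and letting \(K\to\infty\). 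When \(\delta(Q_{0:K})=0\), the convention \(\log0=-\infty\) handles the bound.

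\emph{Step 5: the block bound \eqref{eq:frequency-bound}.} Set \(N=\lfloor K/L\rfloor\). Since \(\delta(\cdot)\le1\) and \(\delta\) is submultiplicative,
\[
\delta(Q_{0:K})\le\prod_{r<N}\delta(B_r)\le\gamma^{G_N},
\]
where \(G_N\) counts the blocks \(r<N\) with \(\delta(B_r)\le\gamma\). Then
\[
\frac{\log\delta(Q_{0:K})}{K\Delta}\le\frac{G_N}{N}\cdot\frac{N}{K}\cdot\frac{\log\gamma}{\Delta}.
\]
Because \(\log\gamma<0\), the liminf assumption \eqref{eq:block-frequency} together with \(N/K\to1/L\) gives limsup at most \(\nu\log\gamma/(L\Delta)\). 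The case \(\gamma=0\) gives \(-\infty\), consistent with the convention. Combining with Step 4 yields \eqref{eq:frequency-bound}.

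\emph{Main obstacle.} I expect the delicate point to be Step 3: justifying that the quotient operator norm is exactly the Kronecker product of the two factor norms. This requires checking that projecting onto \(\one_n^\perp\otimes\R^d\) commutes appropriately with \(Q\otimes\Psi\) and that the \(P\)-weighting acts only on the state factor. The remaining steps are routine.
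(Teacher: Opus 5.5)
Your proposal is correct and follows essentially the same route as the paper: isolated variational coordinates reduce each interval to \(W(A_k;\kappa)\), the Kronecker structure together with \(Q\one_n=\one_n\) gives the exact quotient-norm identity \eqref{eq:exact-norm}, and submultiplicativity of \(\delta\) over blocks (with the partial tail block bounded by one) yields \eqref{eq:frequency-bound}. The only difference is cosmetic: you bound \(\|Q\|_\perp\le 2\sqrt n\,\delta(Q)\) by an entrywise oscillation estimate, whereas the paper obtains \(\sqrt{n-1}\,\delta(Q)\) from the diameter--distance equivalence; either constant vanishes after normalizing by \(K\Delta\).
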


The proof in Appendix~\ref{app:exact-proof} integrates each receiver in isolated variational coordinates, then bounds the ordered product by its diameter contraction coefficient.

For a fixed trajectory whose isolated logarithmic growth rate has a finite limit, the transverse linear exponent is that limit plus \(\Lambda_Q\).

\subsection{Two nonlinear consequences}
\label{sec:nonlinear-consequence}

The linear reduction measures infinitesimal disagreement. To relate it to the nonlinear nodes, we first obtain a local synchronization guarantee for coupling satisfying \eqref{eq:isotropic-tangent}. Let
\begin{equation}
 \beta_L=\sup_{r\ge0,s\in\mathscr S}
 \|\Psi_s(t_{(r+1)L},t_{rL})\|_P
 \le M e^{\alpha_0L\Delta}.
 \label{eq:block-growth}
\end{equation}
Appendix~\ref{app:nonlinear-proof} requires uniform flow and differentiability bounds near synchronization and closure under restarting comparison trajectories. A negative linear exponent alone does not imply these conditions.

\begin{corollary}[Local nonlinear synchronization]
\label{cor:nonlinear-synchronization}
Under \eqref{eq:block-frequency} and the local regularity conditions in Appendix~\ref{app:nonlinear-proof},
\begin{equation}
 \log\beta_L+\nu\log\gamma<0
 \label{eq:nonlinear-condition}
\end{equation}
implies local exponential synchronization in the sense of \eqref{eq:nonlinear-target}. The bound \eqref{eq:block-growth} can be used when \(\beta_L\) is not known directly. The neighborhood and convergence constant may depend on the service realization.
\end{corollary}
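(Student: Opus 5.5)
The plan is to work block by block on the nonlinear error, comparing each block of length \(L\) with its linearization from Theorem~\ref{thm:temporal-contraction}, and then to chain the block estimates. Fix the service realization. On block \(r\), i.e.\ on \([t_{rL},t_{(r+1)L})\), pick a comparison trajectory \(s\) of the isolated flow starting at the node mean \(\bar x(t_{rL})\). Under the closure-under-restarting hypothesis of Appendix~\ref{app:nonlinear-proof}, \(s\) lies in \(\mathscr S\), or at least in a class satisfying the same bounds. Write \(x_i=s+e_i\) and the predictors as \(\widehat y_{j\to i,k}=s+\widehat e_{j\to i,k}\), and expand \(f\) and \(h_{ij}\) about \((s,s)\). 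By \eqref{eq:isotropic-tangent}, the linear part is exactly the system analyzed in Theorem~\ref{thm:temporal-contraction}. Its transverse propagator over one block therefore satisfies
\begin{equation*}
\|\mathcal J_{rL:(r+1)L}^{s,\boldsymbol A}\|_{P,\perp}\le\|B_r\|_\perp\,\beta_L\le\sqrt n\,\delta(B_r)\,\beta_L.
\end{equation*}
The last step uses the comparison between the quotient norm and the Dobrushin coefficient. An \(n\)-dependent constant such as \(\sqrt n\) is harmless here, since it can be absorbed into a refinement described below. On every block we also have the trivial bound \(\|B_r\|_\perp\le C_0\), because \(B_r\) is row stochastic.

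The next step is to control the remainder. Using the uniform \(C^1\) bounds near the synchronization manifold (uniform continuity of \(Df\), \(D h_{ij}\) on a tube around the restarted trajectories), the variation-of-constants formula gives, for \(d_P(X(t_{rL}),\mathcal D)\le\rho\),
\begin{equation*}
d_P(X(t_{(r+1)L}),\mathcal D)\le\bigl(\|B_r\|_\perp\beta_L+\omega(\rho)\bigr)\,d_P(X(t_{rL}),\mathcal D),
\end{equation*}
where \(\omega(\rho)\to0\) as \(\rho\to0\). The constant in \(\omega\) depends on \(L\), \(\Delta\), \(M\), \(\alpha_0\), \(n\) and \(\sigma\), but not on \(r\). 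A Gronwall argument over the finite horizon \(L\Delta\) also keeps the state inside the tube during the block. Predictor errors are handled in the same way: each predictor starts from a sampled state and follows \(f\), so its deviation from \(s\) is bounded by the same flow estimate.

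The \(\sqrt n\) factor is removed by grouping blocks. Condition \eqref{eq:nonlinear-condition} is strict, so choose \(\epsilon>0\) with \(\log\beta_L+\nu\log\gamma+2\epsilon<0\). Then replace \(L\) by \(mL\) for large \(m\): since \(\delta\) is submultiplicative, the constant's logarithm is averaged out, and the frequency condition \eqref{eq:block-frequency} still holds for long stretches. The crude bound \(\beta_{mL}\le\beta_L^m\) follows from the cocycle property of \(\Psi_s\), applied to the restarted trajectory. Multiplying the block estimates gives
\begin{equation*}
\sum_{r<N}\log\bigl(\|B_r\|_\perp\beta_L+\omega(\rho)\bigr)\le N\bigl(\log\beta_L+\nu\log\gamma+\epsilon\bigr)+C_{\boldsymbol A}
\end{equation*}
for all \(N\). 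Here \(C_{\boldsymbol A}\) absorbs the finitely many early blocks before the liminf frequency takes effect, which is where the dependence on the realization enters. This step requires \(\rho\) small enough that \(\log(\gamma\beta_L+\omega)\) and \(\log(C_0\beta_L+\omega)\) differ from their \(\omega=0\) values by less than \(\epsilon\). Choosing the neighborhood radius as \(\rho e^{-C_{\boldsymbol A}}\) makes the trajectory remain within \(\rho\) for all blocks by induction. Interpolating within blocks then yields \eqref{eq:nonlinear-target} with \(\lambda=-(\log\beta_L+\nu\log\gamma+\epsilon)/(L\Delta)\). Finally, \(\beta_L\le Me^{\alpha_0L\Delta}\) from \eqref{eq:block-growth} may replace \(\beta_L\) throughout.

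The main obstacle is the restart step. The comparison trajectory changes at each block, and the linearization must be uniform across these restarts. This is exactly where closure of \(\mathscr S\) under restarting and the uniform regularity hypotheses are needed. The plan is to state them as a uniform modulus \(\omega\) and to verify that \(\beta_L\) bounds the restarted propagators.
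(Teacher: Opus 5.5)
Your skeleton matches the paper's: a block recurrence, counting good blocks, choosing the radius from the maximum of the scalar products, restart induction, and within-block interpolation. The chaining step, however, does not go through in the norm you chose.

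\textbf{The chaining step fails in $d_P$.} In \(d_P(\cdot,\mathcal D)\) the block factor is \(\|B_r\|_\perp\beta_L\). You only have \(\|B_r\|_\perp\le\sqrt{n-1}\,\gamma\) on good blocks and \(\|B_r\|_\perp\le C_0\) on the others, and \(C_0\) cannot in general be taken to be \(1\). For example, a single service \(1\to2\) among three nodes gives \(\delta(W)=1\) but \(\|W\|_\perp>1\) for every \(\kappa>0\). The sum of logarithms is therefore only bounded by \(N[\log\beta_L+\nu'\log\gamma+\nu'\log\sqrt{n-1}+(1-\nu')\log C_0]\), not by your displayed \(N(\log\beta_L+\nu\log\gamma+\epsilon)+C_{\boldsymbol A}\). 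Your closing requirement on \(\log(C_0\beta_L+\omega)\) silently treats \(C_0\) as \(1\).

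\textbf{What grouping would actually require.} Grouping can rescue the argument, but only in a specific form you have not written. A super-block of \(m\) blocks must be bounded by \(\sqrt{n-1}\,\delta(B_{qm+m-1}\cdots B_{qm})\,\beta_L^m\le\sqrt{n-1}\,\gamma^{g_q}\beta_L^m\). Here \(g_q\) is the number of good blocks it contains, and \(\sum_{q<N}g_q=G_{mN}\) is controlled by \eqref{eq:block-frequency}, so that bad blocks cost \(\delta\le1\) rather than \(C_0\). You would also need a remainder estimate over \(mL\) events, but the appendix hypotheses supply one only for \(L\)-blocks. It has to be assembled by composing restarted sub-block estimates, using \(\Pi_\perp B\,\Pi_\perp=\Pi_\perp B\) for row-stochastic \(B\) to merge the re-centrings, with the radius \(\rho\) chosen after \(m\). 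The case \(\gamma=0\) is also not handled, since \(\log(\gamma\beta_L+\omega)\) cannot be within \(\epsilon\) of its \(\omega=0\) value \(-\infty\).

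\textbf{How the paper avoids this.} The paper tracks \(D_r=\diamP(X(t_{rL}))\) instead. Splitting each difference of two stochastic rows into positive and negative parts of equal mass gives \(\diamP\bigl((B_r\otimes\Psi_s)E\bigr)\le\beta_L\,\delta(B_r)\diamP(E)\) with no dimensional constant. Since \eqref{eq:uniform-remainder} is stated in the same quantity, \(D_{r+1}\le[\beta_L\delta(B_r)+\epsilon]D_r\) holds block by block. The factor is at most \(\beta_L\gamma+\epsilon\) on good blocks and \(\beta_L+\epsilon\) on all others. Choosing \(\nu'<\nu\) and then \(\epsilon\) makes \((\beta_L\gamma+\epsilon)^{G_N}(\beta_L+\epsilon)^{N-G_N}\) decay under \eqref{eq:nonlinear-condition}, including when \(\gamma=0\). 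The equivalence \eqref{eq:distance-diameter-equivalence} converts to \(d_P\) only once at the end, costing a fixed constant rather than one per block. Switching your recurrence to \(\diamP\) removes the need for super-blocks and for any \(mL\)-event remainder estimate.
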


To cover finite initial differences throughout a specified domain, we next compare actual node differences with an isolated source trajectory. Fix a directed acyclic graph (DAG) \(D\) on \(V\), using supported edges and having a unique source \(o\), meaning the only node with no incoming edge. Restrict every service to this orientation. Unlike an outward tree, this permits several incoming links at a node. Suppose \(h_{ij}(x,y)=b(y-x)\), \(\mathcal X\) is convex, and a convex \(C^1\) function \(\mathcal V:\mathcal X\to\R\) has a nonempty compact sublevel set
\begin{equation}
 \Omega=\{z\in\mathcal X:\mathcal V(z)\le v_*\},\qquad v_*\in\R.
 \label{eq:domain}
\end{equation}
Assume that, for some \(a>0\),
\begin{equation}
 \nabla\mathcal V(z)^{\mathsf T}f(z)
 \le-a(\mathcal V(z)-v_*),\qquad z\in\mathcal X,
 \label{eq:invariance-assumption}
\end{equation}
and that some \(\alpha_\Omega\in\R\) satisfies
\begin{equation}
 \langle x-y,f(x)-f(y)\rangle_P
 \le\alpha_\Omega\|x-y\|_P^2,\qquad x,y\in\Omega.
 \label{eq:one-sided}
\end{equation}
Here \(\langle u,v\rangle_P=u^{\mathsf T}Pv\). A sufficient test is
\begin{equation}
 \mu_P(Df(z)):=\lambda_{\max}\!\left(
 \operatorname{sym}(P^{1/2}Df(z)P^{-1/2})
 \right)\le\alpha_\Omega
 \label{eq:lognorm}
\end{equation}
on \(\Omega\), where \(\operatorname{sym}(H)=(H+H^{\mathsf T})/2\) and \(\lambda_{\max}\) denotes the largest eigenvalue. Integrating the Jacobian along a segment in the convex domain proves \eqref{eq:one-sided}.

\begin{theorem}[Synchronization on a convex invariant domain]
\label{thm:nonlinear}
Under \eqref{eq:domain}--\eqref{eq:one-sided}, for a state independent service sequence on the fixed DAG, all solutions starting in \(\Omega^n\) are forward complete and all node states and active predictors remain in \(\Omega\). If
\begin{equation}
 \alpha_\Omega+\Lambda_Q<0,
 \label{eq:nonlinear-rate-condition}
\end{equation}
then \eqref{eq:nonlinear-target} holds for every \(X(0)\in\Omega^n\) and every rate \(0<\lambda<-\alpha_\Omega-\Lambda_Q\), with a finite multiplicative constant independent of the initial state. The nodes synchronize to the source's isolated trajectory.
\end{theorem}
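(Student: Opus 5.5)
The plan has two parts. First I will prove that \(\Omega\) is forward invariant for nodes and predictors. Then I will run a comparison argument against the source trajectory. The comparison produces exactly the weights of \(W(A_k;\kappa)\) from \eqref{eq:service-matrix}, so the network factor \(Q_{0:K}\) of Theorem~\ref{thm:temporal-contraction} controls the disagreement. Throughout, the service sequence is fixed and state independent.

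\emph{Invariance and completeness.} Predictors obey \(\dot{\widehat y}=f(\widehat y)\). By \eqref{eq:invariance-assumption}, \(\frac{d}{dt}(\mathcal V(\widehat y)-v_*)\le -a(\mathcal V(\widehat y)-v_*)\), so a predictor initialized in \(\Omega\) stays in \(\Omega\) over \(I_k\). For a node, on \(I_k\) we have \(\dot x_i=f(x_i)+\sigma b\sum_j[A_k]_{ij}(\widehat y_{j\to i,k}-x_i)\). Convexity of \(\mathcal V\) gives \(\nabla\mathcal V(x_i)^{\mathsf T}(\widehat y-x_i)\le \mathcal V(\widehat y)-\mathcal V(x_i)\), and hence
\[
\tfrac{d}{dt}\mathcal V(x_i)\le -(a+\sigma b\ell_{i,k})(\mathcal V(x_i)-v_*)+\sigma b\sum_j[A_k]_{ij}(\mathcal V(\widehat y_{j\to i,k})-v_*).
\]
The last sum is \(\le0\) as long as the predictors lie in \(\Omega\). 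I will make this rigorous by induction over intervals. On each \(I_k\), I take the first exit time of the joint (node, predictor) state from \(\Omega\) and apply the differential inequality to \(\max_i(\mathcal V(x_i)-v_*)^+\) through a Dini-derivative/Gronwall argument. Samples are then taken in \(\Omega\) at each \(t_k\), which closes the induction. Compactness of \(\Omega\) excludes blow-up, which gives forward completeness.

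\emph{Comparison with the source.} The source \(o\) never receives a service, so \(x_o=s\) is an isolated trajectory in \(\Omega\). Set \(\eta_{i}(t)=\|x_i(t)-s(t)\|_P\), with \(\eta_o\equiv0\). By \eqref{eq:one-sided}, each predictor satisfies \(\|\widehat y_{j\to i,k}(t)-s(t)\|_P\le e^{\alpha_\Omega(t-t_k)}\eta_j(t_k)\). For a node, the upper Dini derivative obeys
\[
D^+\eta_i\le(\alpha_\Omega-\sigma b\ell_{i,k})\eta_i+\sigma b\sum_j[A_k]_{ij}\|\widehat y_{j\to i,k}-s\|_P .
\]
Multiplying by \(e^{-\alpha_\Omega(t-t_k)}\) and integrating over \(I_k\) should give
\[
\eta_i(t_{k+1})\le e^{\alpha_\Omega\Delta}\Big[e^{-\kappa\ell_{i,k}}\eta_i(t_k)+\tfrac{1-e^{-\kappa\ell_{i,k}}}{\ell_{i,k}}\sum_j[A_k]_{ij}\eta_j(t_k)\Big],
\]
that is, \(\eta(t_{k+1})\le e^{\alpha_\Omega\Delta}W(A_k;\kappa)\eta(t_k)\) componentwise. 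The same integration gives \(\eta(t)\le e^{|\alpha_\Omega|\Delta}\max_j\eta_j(t_k)\cdot\one_n\) inside each interval. Since the matrices are nonnegative, the inequalities iterate to \(\eta(t_K)\le e^{\alpha_\Omega K\Delta}Q_{0:K}\eta(0)\).

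\emph{From \(Q_{0:K}\) to the rate.} Every \(W\) has \(o\)-th row \(e_o^{\mathsf T}\), so \(v=Q_{0:K}\eta(0)\) has \(v_o=0\). Hence \(\|v\|_\infty\le\max v-\min v\), which is bounded by a dimensional constant times the distance of \(v\) to \(\operatorname{span}\one_n\). That distance is at most \(\|Q_{0:K}\|_\perp\) times the distance of \(\eta(0)\) to \(\operatorname{span}\one_n\), which is \(\le\|\eta(0)\|_2\). Next, \(\|x_i-x_o\|_P\le\|x_i-\bar x\|_P+\|x_o-\bar x\|_P\), so \(\|\eta(0)\|_2\le 2\sqrt n\,d_P(X(0),\mathcal D)\). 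Also \(d_P(X,\mathcal D)\le\|\eta\|_2\), because the node mean minimizes \eqref{eq:sync-distance}. Given \(\lambda<-\alpha_\Omega-\Lambda_Q\), choose \(\varepsilon>0\) with \(\alpha_\Omega+\Lambda_Q+\varepsilon\le-\lambda\). The definition of \(\limsup\) then gives \(\|Q_{0:K}\|_\perp\le C_\varepsilon e^{(\Lambda_Q+\varepsilon)K\Delta}\) for all \(K\), with \(C_\varepsilon\) depending only on \(\boldsymbol A\). Combining this with the within-interval bound yields \eqref{eq:nonlinear-target} with a constant independent of \(X(0)\), and shows convergence to \(s\).

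I expect the main obstacle to be the rigorous handling of the invariance step. The node inequality uses predictors lying in \(\Omega\), and those predictors came from node samples at \(t_k\), so the argument must be done interval by interval with the barrier applied to the nonsmooth maximum. The Dini-derivative integration that yields exactly the \(W\)-weights is the key computation, but I expect it to be routine once invariance is secured.
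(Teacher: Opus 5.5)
Your proposal is correct and follows the paper's proof step by step: invariance from convexity of \(\mathcal V\) with interval-by-interval induction, the source as the isolated reference, the Dini-derivative comparison that integrates to \(e^{\alpha_\Omega\Delta}W(A_k;\kappa)\), and iteration to \(e^{\alpha_\Omega K\Delta}Q_{0:K}\). The only difference is the final norm conversion. There you use the vanishing source entry of \(Q_{0:K}\eta(0)\) together with \(\operatorname{dist}(Q\eta,\operatorname{span}\one_n)\le\|Q\|_\perp\operatorname{dist}(\eta,\operatorname{span}\one_n)\), whereas the paper uses the similarity between the source-deleted block \(\overline Q_{0:K}\) and \(U^{\mathsf T}Q_{0:K}U\); both are valid.
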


Appendix~\ref{app:domain-proof} proves the domain guarantee; Section~\ref{sec:resource-policy} supplies computable policy rates. The equilibrium test below assesses conservatism by giving a necessary condition for the same uniform guarantee.

\emph{Remark (a necessary test for a uniform guarantee).}
\label{rem:equilibrium-necessity}
Let \(z_*\in\operatorname{int}\Omega\) satisfy \(f(z_*)=0\), and let \(a_*\in\R\) be the largest real part of an eigenvalue of \(Df(z_*)\). For a fixed service realization independent of node states, with isotropic tangent at \(z_*\), estimate \eqref{eq:nonlinear-target} with the same finite constant and rate \(\lambda>0\) for all initial states in \(\Omega^n\) necessarily implies
\begin{equation}
 a_*+\Lambda_Q\le-\lambda<0.
 \label{eq:equilibrium-necessity}
\end{equation}
The proof differentiates the uniform estimate at the equilibrium (Appendix~\ref{app:domain-proof}).

\subsection{Why chronological order matters}

The following example isolates event order while keeping the other factors fixed. Consider three scalar nodes on the undirected path \(1\! -\!2\! -\!3\), with \(f=0\), \(h(x,y)=y-x\), and \(\kappa=\log2\). Compare the chronological blocks
\begin{equation}
 \begin{aligned}
 \mathcal O_{\rm prop}&=(1\to2,\ 2\to3,\ 3\to2,\ 2\to1),\\
 \mathcal O_{\rm alt}&=(2\to3,\ 1\to2,\ 2\to1,\ 3\to2).
 \end{aligned}
 \label{eq:order-products}
\end{equation}
Let \(Q_{\rm prop}\) and \(Q_{\rm alt}\) be their ordered products. They have the same service multiset, duration, strength, and physical support, but \(\delta(Q_{\rm prop})=7/16\) and \(\delta(Q_{\rm alt})=3/4\). Their transverse spectral radii, meaning the largest eigenvalue moduli excluding the synchronized eigenvalue \(1\), are
\begin{equation}
 \rho_\perp(Q_{\rm prop})=\frac{9+\sqrt{17}}{32},\qquad
 \rho_\perp(Q_{\rm alt})=\frac{3+\sqrt5}{8}.
 \label{eq:order-spectral-values}
\end{equation}
Repeating these blocks gives distinct contraction rates despite identical static aggregates. Direct iteration of the two block maps from the same initial state confirms the predicted difference in disagreement decay (not shown). The fixed-DAG frequency formula below exploits a special triangular structure; it does not remove this general order effect.

\section{From consumable resources to temporal coupling}
\label{sec:resource-policy}
We now ask how resources can generate the contracting service sequences identified above. A frame policy gives a construction; a resource bound then limits the contraction achievable by feasible policies with acyclic event services.

\subsection{Inventory and feasible decisions}
For \(e=\{i,j\}\), let \(u_{j\to i,k},u_{i\to j,k}\in\{0,1\}\) be the directional decisions and impose
\begin{equation}
s_{e,k}:=u_{j\to i,k}+u_{i\to j,k}\le q_{e,k},\qquad
[A_k]_{ij}=u_{j\to i,k}.
\label{eq:feasible-action}
\end{equation}
Each edge serves at most one direction per event; distinct edges may operate simultaneously even when sharing a node. A causal policy uses the resource history \(\mathcal F_k^R=\operatorname{sig}(q_0,U_0,q_1,\ldots,U_{k-1},q_k)\), optionally enlarged by independent policy randomization available by event \(k\), but not node states or future arrivals. After consumption and interval evolution, arrivals \(g_{e,k+1}\in\{0,1\}\) update inventory:
\begin{equation}
q_{e,k+1}=\min\{1,q_{e,k}-s_{e,k}+g_{e,k+1}\}.
\label{eq:inventory-update}
\end{equation}
Arrivals are independent across edges and positive event indices, with \(\Prob(g_{e,k}=1)=p_e\in[0,1]\); \(q_0\) is arbitrary. A new unit is first available at the next decision. Storage has unit capacity and no loss. These assumptions define the policy class in the rate comparison below.

\subsection{Accumulation and dissemination as a construction}
Choose a spanning tree \(T\) and a root \(o\). Its outward orientation directs each parent to its children. Let \(h_T\ge1\) be its maximum depth, and let \(A_T^{(d)}\) contain the edges whose children are at depth \(d\). Their ordered product is
\begin{equation}
B_T=W(A_T^{(h_T)};\kappa)\cdots W(A_T^{(1)};\kappa).
\label{eq:tree-product}
\end{equation}
Writing \(\eta=1-e^{-\kappa}\) and \(d_T(i)\) for node \(i\)'s depth, the root weight in row \(i\) is \(\eta^{d_T(i)}\ge\eta^{h_T}\). Hence
\begin{equation}
\delta(B_T)\le\gamma_T:=1-(1-e^{-\kappa})^{h_T}<1.
\label{eq:tree-contraction}
\end{equation}
For an integer \(\tau\ge1\), divide time into frames of \(L=\tau+h_T\) events. The \emph{frame policy} \(\pi_{T,\tau}\) waits for the first \(\tau\) events, checks whether every tree edge is ready, and, if so, executes the \(h_T\) levels in order. Otherwise it idles for the rest of the frame. Edges outside \(T\) are unused.

Fresh arrivals fill every edge during the accumulation phase with probability
\begin{equation}
\rho_{T,\tau}=\prod_{e\in T}[1-(1-p_e)^\tau].
\label{eq:ready-probability}
\end{equation}
These fresh-readiness events depend on disjoint arrival sets in different frames. Each guarantees success regardless of the previous inventory, so the strong law gives a success frequency at least \(\rho_{T,\tau}\), simultaneously for all initial inventories. If \(\rho_{T,\tau}>0\), Theorem~\ref{thm:temporal-contraction} applies with \(\nu=\rho_{T,\tau}\) and \(\gamma=\gamma_T\), giving \(c\ge-\rho_{T,\tau}\log\gamma_T/(L\Delta)\). With the same substitution, \eqref{eq:nonlinear-condition} supplies a sufficient test under its local regularity assumptions. The frame policy thus realizes the contracting windows required by Corollary~\ref{cor:nonlinear-synchronization} and supplies the schedule used in Fig.~\ref{fig:kuramoto-direction}.

\subsection{Resource limit and an attaining policy}
To assess the costs of waiting and sequential dissemination, we next ask what asymptotic contraction the resource supply permits. Let \(S_e(K)=\sum_{k=0}^{K-1}s_{e,k}\) count services. A tree can leave useful physical edges unused, so we consider a fixed directed acyclic graph \(D\) on \(V\), orienting an allowed edge set \(F\subseteq E_{\mathrm{phy}}\), with unique source \(o\). Write \(\operatorname{in}_D(i)\) for node \(i\)'s incoming edges, \(d_i^-=|\operatorname{in}_D(i)|\), and \(S_i(K)=\sum_{e\in\operatorname{in}_D(i)}S_e(K)\). Greedy \(\mathcal G_D\) serves every ready edge in its prescribed direction. The theorem first bounds all acyclic event services, then identifies the optimum within this fixed orientation.

\begin{theorem}[Resource limit and optimal scheduling on a fixed DAG]
\label{thm:resource}
There is one arrival event of probability one on which the following statements hold for all initial inventories.

(i) For any fixed allowed set \(F\subseteq E_{\mathrm{phy}}\), every feasible sequence on \(F\) whose service graph is acyclic at each event satisfies
\begin{equation}
\liminf_{K\to\infty}\frac{\log\|Q_{0:K}\|_\perp}{K\Delta}
\ge-\frac{\sigma b}{n-1}\sum_{e\in F}p_e.
\label{eq:resource-bound}
\end{equation}
The same lower bound holds for \(\Lambda_Q\), including schedules that know future arrivals.

(ii) For services using the fixed orientation \(D\), if \(R_i=\lim_{K\to\infty}S_i(K)/K\in[0,d_i^-]\) exists for every \(i\ne o\), then
\begin{equation}
\lim_{K\to\infty}\frac{\log\|Q_{0:K}\|_\perp}{K\Delta}
=-\sigma b\min_{i\ne o}R_i.
\label{eq:dag-rate}
\end{equation}
This identity is deterministic, including zero frequencies.

(iii) Put \(w_i=\sum_{e\in\operatorname{in}_D(i)}p_e\). Greedy service has edge frequencies \(p_e\) and gives
\begin{equation}
c_D:=-\Lambda_Q(\mathcal G_D)
=\sigma b\min_{i\ne o}w_i.
\label{eq:greedy-dag}
\end{equation}
It maximizes \(c\) over all feasible schedules using \(D\), even when competing frequencies have no limits. It attains the total supply bound \eqref{eq:resource-bound} for the same \(F\) if and only if all \(w_i\), \(i\ne o\), are equal.
\end{theorem}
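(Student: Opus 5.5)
The plan rests on two structural facts. First, every \(W(A;\kappa)\) fixes \(\one_n\). Second, acyclic service graphs make the event matrices triangular after a permutation. For the whole proof, fix the probability-one event on which the strong law gives \(K^{-1}\sum_{k=1}^{K}g_{e,k}\to p_e\) for every edge. This is a single event, independent of the initial inventory and of the schedule.

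\emph{Part (i).} In the orthonormal basis \([\one_n/\sqrt n,\;U]\), any row-stochastic \(Q\) has first column \(e_1\). It is therefore block upper triangular, and \(\det Q=\det(U^{\mathsf T}QU)\). It follows that \(\|Q\|_\perp^{\,n-1}\ge|\det(U^{\mathsf T}QU)|=|\det Q|\).

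If the service graph of \(A_k\) is acyclic, a topological permutation makes \(W(A_k;\kappa)\) triangular. Its diagonal is \(e^{-\kappa\ell_{i,k}}\), so \(\det W(A_k;\kappa)=\exp(-\kappa\sum_i\ell_{i,k})=\exp(-\kappa\sum_e s_{e,k})\). Multiplying over events gives
\[
\log\|Q_{0:K}\|_\perp\ge-\frac{\kappa}{n-1}\sum_{e\in F}S_e(K).
\]
The inventory update \eqref{eq:inventory-update} gives \(S_e(K)\le1+\sum_{k=1}^{K-1}g_{e,k}\), whatever the schedule knows, including future arrivals. Dividing by \(K\Delta\), using \(\kappa/\Delta=\sigma b\) and the strong law yields \eqref{eq:resource-bound}. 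The same bound holds for the \(\limsup\), i.e.\ for \(\Lambda_Q\).

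\emph{Part (ii).} Order \(V\) topologically for \(D\), with \(o\) first. Every \(W(A_k;\kappa)\) is then lower triangular, row \(o\) equals \(e_o^{\mathsf T}\), and the diagonal entry of row \(i\) is \(e^{-\kappa\ell_{i,k}}\). The ordered product \(Q_{0:K}\) is therefore lower triangular with diagonal entries \(\exp(-\kappa S_i(K))\). Its transverse spectrum is \(\{\exp(-\kappa S_i(K))\}_{i\ne o}\).

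\emph{Lower bound.} The quotient norm dominates the transverse spectral radius, so \(\|Q_{0:K}\|_\perp\ge\exp(-\kappa\min_{i\ne o}S_i(K))\).

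\emph{Upper bound.} Write \(x-x_o\one\). The disagreement evolves under the non-source block \(N_{0:K}\), an ordered product of substochastic lower-triangular matrices. Expand \(N_{0:K}\) entrywise as a sum over DAG paths \(j=i_0\to i_1\to\cdots\to i_m=i\), with \(m\le n-2\) and jump times \(k_1<\cdots<k_m\). Off-diagonal factors are at most one. Each term is therefore bounded by \(\exp\bigl(-\kappa\sum_{r}[S_{i_r}(k_{r+1})-S_{i_r}(k_r)]\bigr)\), up to a bounded correction for the events at which the jumps themselves occur.

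Since \(S_i(K)/K\to R_i\), there is \(\epsilon_K=o(K)\) with \(S_i(k')-S_i(k)\ge R_i(k'-k)-\epsilon_K\) for all \(0\le k\le k'\le K\). Each path term is thus at most \(\exp(-\kappa\min_{i\ne o}R_i\,K+n\kappa\epsilon_K)\). There are at most \(\mathrm{poly}(K)\) path/time choices, so \(\log\|Q_{0:K}\|_\perp\le-\kappa\min_i R_iK+o(K)\). Together with the lower bound this gives \eqref{eq:dag-rate}. Zero frequencies cause no problem: the bound is then trivial, and it matches because the spectral factor is \(e^{-o(K)}\).

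I expect this upper bound to be the main obstacle. The difficulty is making the uniform-in-segments \(o(K)\) estimate rigorous when the \(R_i\) differ along a path, and handling the boundary events at each jump.

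\emph{Part (iii).} Under \(\mathcal G_D\), every ready edge in \(F\) is emptied at each event. Hence no arrival is lost to the capacity cap, and \(S_e(K)=q_{e,0}+\sum_{k=1}^{K-1}g_{e,k}\). By the strong law, \(S_e(K)/K\to p_e\), so \(R_i=w_i\), and part (ii) gives \eqref{eq:greedy-dag}.

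For optimality, take an arbitrary feasible schedule using \(D\) and fix \(i^*\in\arg\min_{i\ne o}w_i\). For every \(K\), the spectral lower bound gives \(\log\|Q_{0:K}\|_\perp\ge-\kappa S_{i^*}(K)\). The arrival-count bound \(S_{i^*}(K)\le d_{i^*}^-+\sum_{e\in\operatorname{in}_D(i^*)}\sum_{k<K}g_{e,k}\) then yields \(\Lambda_Q\ge-\sigma b\,w_{i^*}\). This needs no limits of the competing frequencies.

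For attainment, note that each edge of \(F\) enters exactly one non-source node, so \(\sum_{i\ne o}w_i=\sum_{e\in F}p_e\). Then \(\min_{i\ne o}w_i\le\frac1{n-1}\sum_{e\in F}p_e\), with equality iff all \(w_i\) are equal. This is exactly attainment of \eqref{eq:resource-bound} on the same \(F\).
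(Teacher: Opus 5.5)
Your proposal is correct and follows essentially the same route as the paper's proof: the quotient-determinant (geometric-mean) bound with pathwise conservation for (i), the lower-triangular path expansion of the non-source block for (ii), and, for (iii), greedy frequencies, a diagonal lower bound (you use the transverse spectral radius, the paper a diagonal entry plus a fixed similarity), conservation, and $\sum_{i\ne o}w_i=\sum_{e\in F}p_e$. The step you flagged closes exactly as in the paper: with $R_*=\min_{i\ne o}R_i$ and $E_K=\max_{i\ne o}\max_{0\le t\le K}|S_i(t)-R_it|=o(K)$, every diagonal interval $[a,b)$ at node $i$ satisfies $S_i(b)-S_i(a)\ge R_*(b-a)-2E_K$ because $R_i\ge R_*$, and the fixed change of quotient coordinates $x\mapsto(x_i-x_o)_{i\ne o}$ transfers the bound to $\|\cdot\|_\perp$.
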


The proof in Appendix~\ref{app:resource-proof} uses conservation for the supply limit and triangular products for the rate on a fixed orientation. Part (i) requires acyclic services, not acyclic physical support: their orientation may change between events. Parts (ii)--(iii) use one fixed orientation throughout.

On an outward tree, each nonroot node has one incoming edge. Thus, whenever the edge frequencies \(r_e=\lim_K S_e(K)/K\in[0,1]\) exist, part (ii) reduces to \(c=\sigma b\min_{e\in T}r_e\); greedy gives \(c=\sigma b\min_{e\in T}p_e\). If all tree edges have the same arrival probability \(p\) and only those edges are allowed, every directional service is acyclic. Greedy then attains the bound \(c\le\sigma bp\) from part (i), making it optimal among \emph{all} feasible directional schedules on that tree. This is the comparison in Fig.~\ref{fig:resource-comparison}.

On a fixed DAG, the bottleneck is instead the minimum total incoming supply. Equal incoming supplies need not require equal edge probabilities: probabilities \(0.6\), \(0.2\), and \(0.4\) on \(1\to2\), \(1\to3\), and \(2\to3\) give two incoming sums of \(0.6\). Figure~\ref{fig:general-graph} uses \eqref{eq:greedy-dag} to compare trees with orientations retaining additional edges.

Combining the finite resource bound with the exact separation gives, for the acyclic services in part (i),
\begin{equation}
\Lambda_\perp^{\mathrm{lin}}\ge
\lambda_{\mathscr S}-\frac{\sigma b}{n-1}\sum_{e\in F}p_e.
\label{eq:necessary-resource}
\end{equation}
A negative transverse linear exponent therefore requires \(\sigma b\sum_{e\in F}p_e>(n-1)\lambda_{\mathscr S}\). If \(\Omega\) contains the interior equilibrium from \eqref{eq:equilibrium-necessity}, the uniform nonlinear guarantee in \eqref{eq:nonlinear-target} similarly requires
\begin{equation}
\frac{\sigma b}{n-1}\sum_{e\in F}p_e>a_*.
\label{eq:necessary-uniform-resource}
\end{equation}
This necessary, not sufficient, condition covers every state independent schedule in part (i), including every directional schedule on a physical tree.

The resource layer supplies \(c\), while the synchronization criteria specify how large it must be. For a fixed DAG with limiting incoming frequencies, the exact linear exponent is \(\lambda_{\mathscr S}-c\). Under the assumptions of Theorem~\ref{thm:nonlinear}, \(c>\alpha_\Omega\) is sufficient for the uniform domain guarantee; \(c>a_*\) is necessary when the stated equilibrium lies in its interior.

For a tree, depth and event order affect transients, not the rate at fixed frequencies; waiting can change the frequencies themselves. With weights \(p_e\), a maximum total weight spanning tree also maximizes its minimum edge weight, hence the greedy rate among outward spanning trees. Root choice affects transients but not this rate.

\section{Entanglement assisted service interpretation}
\label{sec:entanglement-interpretation}
The model begins with completed transfers and does not prescribe a quantum carrier. A hybrid node combines quantum memory, a quantum interface, and a classical controller that accepts qualified transfers \cite{Dahlberg2019,Cicconetti2021}.

A service encodes a finite precision sample, consumes entanglement, and decodes the sample after local operations and exchange of the classical outcome \cite{Bennett1993}. Ideal teleportation of a \(D\)-dimensional carrier, \(D\ge2\), consumes a maximally entangled state of that dimension and uses \(D^2\) possible outcomes. Thus one resource unit may represent the resources needed for a full transfer.

Finite codebooks quantize real-valued samples; continuous variable teleportation adds noise at finite squeezing \cite{BraunsteinKimble1998,Pirandola2015}. Equation~\eqref{eq:sample} is the ideal limit that isolates temporal organization. In practice, sampling precedes completion; imprecise samples, erasures, and latency require an extension of this interface.

Physical edges remain undirected; policies assign directions and consume stored resources. Classical messages contribute to completion delay, consistently with no signaling \cite{Simon2001}.

Binary inventory assumes a qualified unit stored without degradation. Decoherence, purification, swapping, larger memories, and contention require a richer generator subject to the temporal synchronization requirement.

\section{Numerical illustrations}
\label{sec:numerical-results}
Each comparison keeps the physical support fixed and uses resource histories as stochastic replicates. We use fourth-order Runge--Kutta integration with steps aligned to service events. Detailed checks are reported in the accompanying Numerical Verification supplement.

\subsection{A nonlinear resource tree}
\label{sec:kuramoto-numerics}
Take seven identical-frequency Kuramoto oscillators on a fixed undirected binary tree of depth two, rooted at node 1 \cite{Acebron2005}. On a common phase lift in the rotating frame,
\begin{equation}
\dot x_i=\sigma\sum_j[A_k]_{ij}
\sin(\widehat x_{j\to i,k}-x_i).
\label{eq:kuramoto-model}
\end{equation}
Here \(f=0\), \(h(x,y)=\sin(y-x)\), \(b=1\), and predictors are constant. Taking \(\mathscr S\) as all constant synchronous states gives \(P=M=\beta_L=1\) and \(\alpha_0=0\). Bounded sine derivatives, phase shift invariance, and invariance of a lifted interval of width below \(\pi\) verify the regularity and restart requirements of Corollary~\ref{cor:nonlinear-synchronization}.
\begin{figure*}[!t]
\centering
\includegraphics[width=\textwidth]{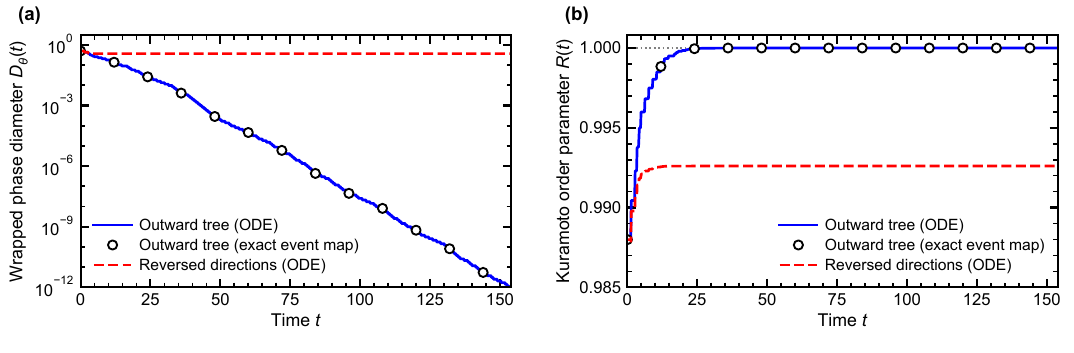}
\caption{Kuramoto synchronization on a resource tree. (a) Outward services remove the wrapped phase diameter; open circles show the exact event map. Reversing directions at the same service times leaves unequal leaf phases unchanged. (b) The phase order parameter approaches one in the outward run. Both runs share physical support, arrivals, inventory, and edge consumption times.}
\label{fig:kuramoto-direction}
\end{figure*}

The frame policy serves the two levels in order. For common \(p>0\),
\[
\rho_{T,\tau}=[1-(1-p)^\tau]^6,\qquad
\gamma_T=1-(1-e^{-\kappa})^2.
\]
The local condition \(\rho_{T,\tau}\log\gamma_T<0\) holds for all \(p>0\). With \(\sigma=2\), \(\Delta=0.1\), \(p=0.3\), \(\tau=5\), and empty initial inventories, 150 of 220 frames complete both levels. We reverse directions while retaining these service times.

With \(\Arg\in(-\pi,\pi]\) the principal argument, define \(D_\theta=\max_{i,j}|\Arg(e^{\mathrm i(x_i-x_j)})|\) and the order parameter
\begin{equation}
R(t)=\left|\frac1n\sum_j e^{\mathrm i x_j(t)}\right|\in[0,1],
\qquad \mathrm i^2=-1.
\label{eq:kuramoto-order-parameter}
\end{equation}
From \(x(0)=(0,-0.24,0.12,-0.11,0.04,0.15,0.26)\), outward services give \(D_\theta<10^{-12}\) and \(R=1\) at numerical precision. Reversed services leave unequal leaves unforced, giving \(D_\theta=0.37\), \(R\approx0.993\). This isolates direction, not order, and illustrates the local guarantee.

We check the integration against the exact receiver map
\[
u^+=v+2\arctan\!\left[e^{-\kappa}\tan((u-v)/2)\right].
\]
Here \(u\) is the receiver state and \(v\) its fixed predictor.

\subsection{Lorenz synchronization and two kinds of prediction}
\label{sec:lorenz-numerics}
To distinguish local decay from a guarantee throughout a domain, we use five Lorenz nodes \cite{Lorenz1963} on an undirected star, with root 1, outward services, linear coupling \(h(x,y)=y-x\), \(\sigma=32\), and \(\Delta=0.05\). The isolated node vector field is
\[
f_{\mathrm L}(x,y,z)=(10(y-x),\,28x-y-xz,\,xy-8z/3).
\]
\begin{figure*}[!t]
\centering
\includegraphics[width=\textwidth]{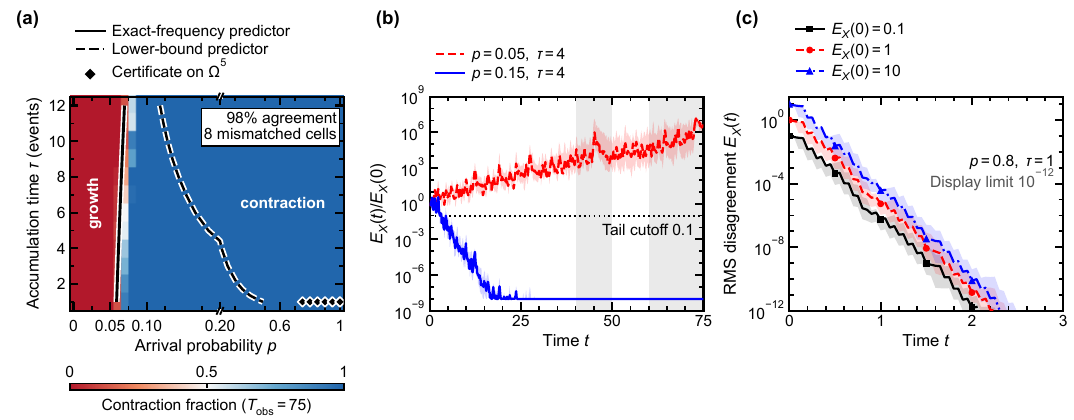}
\caption{Lorenz dynamics on one physical star. (a) Contraction fraction over 32 histories at horizon 75. Solid and dashed contours use the empirical isolated exponent with exact and lower-bound frequencies, respectively; diamonds mark the nonlinear certificate on \(\Omega^5\). Breaks mark the axis scale change at \(p=0.2\). (b) Median and interquartile relative errors. The dotted line marks the tail cutoff; shading marks final observation windows. (c) Finite initial differences at \(p=0.8,\tau=1\). Curves and bands show the median and interquartile range over 32 histories, after taking the median over 16 configurations per history. Display ends at time 3 and omits median errors below \(10^{-12}\).}
\label{fig:lorenz-numerics}
\end{figure*}
After each accumulation interval, the frame policy serves all four edges if they are ready. For common arrival probability \(p\), the fresh-readiness bound is \(\rho=[1-(1-p)^\tau]^4\). The exact stationary fraction of successful frames is
\begin{equation}
\nu_\star(p,\tau)=\left[
\sum_{r=1}^4\frac{(-1)^{r+1}\binom4r}
{1-(1-p)^{r(\tau+1)}}\right]^{-1},\qquad p>0,
\label{eq:exact-service-frequency}
\end{equation}
with \(\nu_\star(0,\tau)=0\). After service, each edge refills within a frame with probability \(1-(1-p)^{\tau+1}\). The maximum of four independent geometric refill times has the bracketed mean, whose reciprocal gives the renewal frequency. Thus \(r_e=\nu_\star/(\tau+1)\).

Exact separation gives the trajectory-specific linear threshold \(\lambda_1-\sigma\nu_\star/(\tau+1)=0\) when the isolated exponent \(\lambda_1\) exists. The solid contour uses the empirical estimate \(\widehat\lambda_1\approx0.91\); the dashed contour also replaces \(\nu_\star\) by \(\rho\). Neither contour is a domain guarantee.

We now apply Theorem~\ref{thm:nonlinear} to obtain a guarantee throughout a specified Lorenz domain. Put \(w=z-38\), \(\mathcal V=x^2+y^2+w^2\), \(R_{\mathrm L}^2=23104/15\), and \(\Omega=\{\mathcal V\le R_{\mathrm L}^2\}\). Completing the square gives
\begin{equation}
\dot{\mathcal V}=-2\mathcal V+\frac{46208}{15}-18x^2
-\frac{10}{3}\left(w+\frac{152}{5}\right)^2
\le-2(\mathcal V-R_{\mathrm L}^2).
\label{eq:lorenz-invariant-region}
\end{equation}
For \(P=\diag(3,1,1)\), the symmetric transformed Jacobian is
\[
\begin{pmatrix}
-10&(20-w)/(2\sqrt3)&y/(2\sqrt3)\\
(20-w)/(2\sqrt3)&-1&0\\
y/(2\sqrt3)&0&-8/3
\end{pmatrix}.
\]
On \(\Omega\), the norm of its first-row off diagonal vector is at most \((20+R_{\mathrm L})/(2\sqrt3)\). Bounding the lower block by \(-I_2\) yields
\begin{equation}
\alpha_\Omega=
\frac{-11+\sqrt{81+(20+152/\sqrt{15})^2/3}}2
\approx12.2.
\label{eq:lorenz-uniform-alpha}
\end{equation}
Theorem~\ref{thm:nonlinear} therefore certifies all initial states in \(\Omega^5\) when
\begin{equation}
\frac{32\nu_\star(p,\tau)}{\tau+1}>\alpha_\Omega.
\label{eq:lorenz-nonlinear-certificate}
\end{equation}
The origin lies inside this domain because \(\mathcal V(0)=1444<R_{\mathrm L}^2\). Its largest unstable eigenvalue is
\[
 a_*=\frac{-11+\sqrt{1201}}2\approx11.8.
\]
Thus \eqref{eq:equilibrium-necessity} requires \(c=-\Lambda_Q>a_*\) for a uniform exponential guarantee on \(\Omega^5\), for each fixed service realization. The sufficient requirement \(c>\alpha_\Omega\) is about \(3\%\) higher. The empirical contour instead concerns a typical chaotic trajectory.

We test \(p=0,0.01,\ldots,0.15,0.20,0.25,\ldots,1\) and \(\tau=1,\ldots,12\) with 32 independent histories paired across cells. All nodes start in \(\Omega\), with a fixed reference and centered perturbation. Define
\[
E_X(t)=\left[n^{-1}\sum_i\|x_i(t)-\overline x(t)\|_2^2\right]^{1/2}.
\]
Here \(\overline x=n^{-1}\sum_i x_i\) and \(E_X(0)=10^{-6}\). At horizon \(T_{\rm obs}\), a path passes if terminal logarithmic growth is negative (error floor \(10^{-14}\)) and the ninetieth percentile of \(E_X(t)/E_X(0)\) over the last fifth is at most \(0.1\). At least 16 of 32 paths must pass for a cell to be labeled contracting.

At horizon 75, the contour agrees in 388/396 cells (98\%), compared with 379 (96\%) at horizon 50. All discrepancies are predicted contractions failing the finite-time rule. Agreement falls to 38/46 (83\%) within \(|\widehat\lambda_1-c|\le0.25\). Paired cells are not independent replicates. All histories pass at both horizons in six certified cells: \(\tau=1\), \(p=0.75,0.80,\ldots,1\).

Figure~\ref{fig:lorenz-numerics}(c) uses \(p=0.8,\tau=1\), with rate \(13.8>\alpha_\Omega\). Sixteen uniform anchors in the radius-8 ball about \((0,0,38)\) receive centered Gaussian offsets scaled to \(E_X(0)=0.1,1,10\). Since \(8+10\sqrt5<R_{\mathrm L}\), all initial states lie in \(\Omega^5\). With 32 paired histories, all 1536 combinations remain in the domain and reach numerical precision by time 10.

\subsection{From resource rates to Lorenz response}
Having identified the local and domain-wide Lorenz contraction requirements, Fig.~\ref{fig:resource-comparison} compares how greedy service and frame policies \(\pi_{T,1}\), \(\pi_{T,5}\) meet them under the same supply on the star. Their network contraction rates are \(32p\) and \(32\nu_\star/(\tau+1)\), respectively. Theorem~\ref{thm:resource} gives \(c\le32p\) for every directional schedule, so the black line is both the supply limit and the optimum.

Panel (a) estimates \(32\min_e S_e(K)/K\) from 32 histories paired across policies and probabilities. Greedy's finite-time minimum is downward biased; its error bar does not bound the asymptotic line.
\begin{figure*}[!t]
\centering
\includegraphics[width=\textwidth]{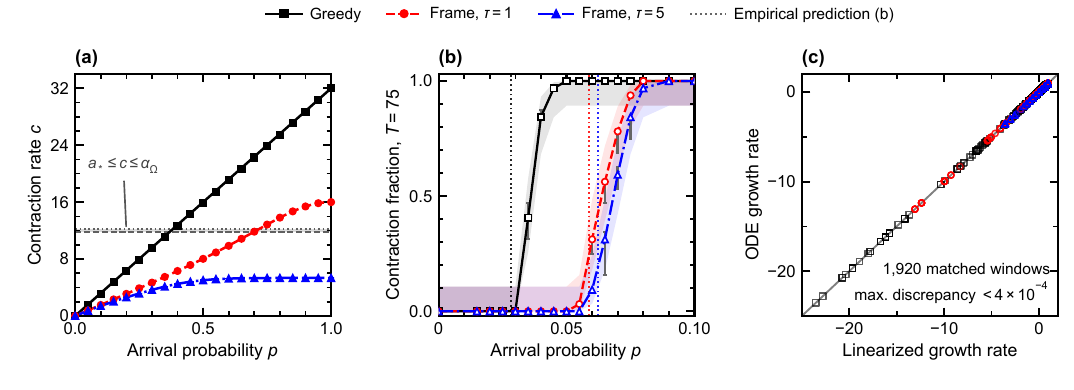}
\caption{Resource rates and Lorenz response. (a) Lines give exact rates; markers show means and one standard error over 32 histories at 20000 events. The strip spans necessary \(a_*\) to sufficient \(\alpha_\Omega\) requirements for a domain guarantee. (b) Horizon-75 ODE contraction fractions. Colored bands show pointwise Wilson 95\% intervals over 32 histories; gray segments show ranges across three integration resolutions at 12 tested cells. Dotted lines use the empirical isolated exponent. (c) Predicted versus measured growth over 1920 matched windows and initial vectors, using the maximum root-relative norm rather than Fig.~\ref{fig:lorenz-numerics}'s RMS error.}
\label{fig:resource-comparison}
\end{figure*}

\begin{figure*}[!t]
\centering
\includegraphics[width=\textwidth]{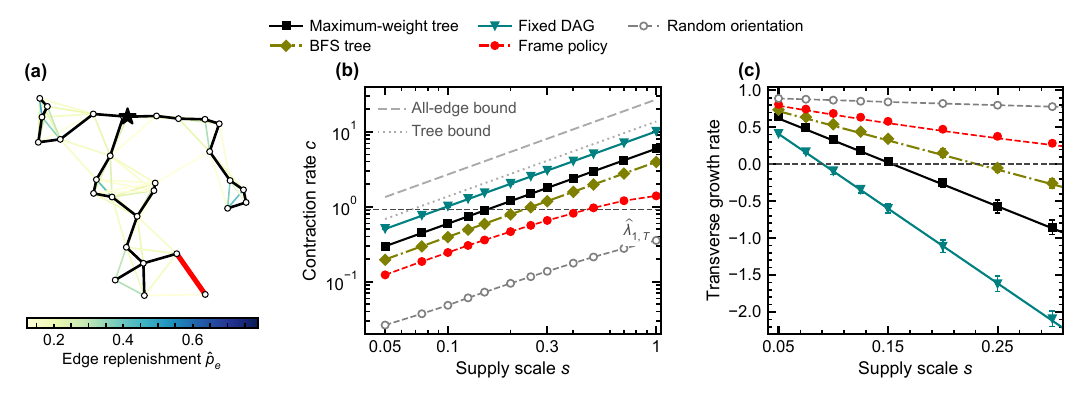}
\caption{Heterogeneous resources on one physical graph. (a) Edge shade represents \(\widehat p_e\); bold edges form the maximum weight tree; the star marks the root, and red marks the bottleneck. (b) Means and standard deviations over eight runs on logarithmic axes. Tree and DAG lines give exact rates; frame and random-orientation lines connect estimates. Gray lines bound acyclic services; the horizontal line estimates isolated growth on \([50,800]\). (c) Means and standard deviations of four variational ODE runs per point. Lines subtract exact tree and DAG rates or estimated frame and random-orientation rates from this empirical growth. Test vectors and resource histories vary between runs. Zero separates local growth from decay, not nonlinear synchronization domains.}
\label{fig:general-graph}
\end{figure*}

For a uniform guarantee on \(\Omega^5\), \eqref{eq:necessary-uniform-resource} requires \(p>a_*/32\) for every feasible directional schedule on this star. Greedy is certified for \(p>\alpha_\Omega/32\); these necessary and sufficient thresholds are approximately \(0.37\) and \(0.38\). For \(\pi_{T,1}\), they are approximately \(0.70\) and \(0.72\). The intervening gaps remain undecided. Policy \(\pi_{T,5}\) cannot meet the necessary requirement even at \(p=1\), although it can synchronize typical trajectories. Figure~\ref{fig:lorenz-numerics}(c) illustrates the domain guarantee for \(\pi_{T,1}\).

Figure~\ref{fig:resource-comparison}(b) integrates the five-node ODE at 20 probabilities near \(0.015\le p\le0.1\), using the same perturbation and 32 histories paired across three policies (1920 runs). At \(p=0.04\), greedy passes in 27 of 32 histories; neither frame policy passes. The transitions require higher \(p\) than the typical-trajectory predictions because the criterion demands tenfold decay within the observation horizon, while intrinsic growth and service counts fluctuate.

Near the transition, increasing the integration resolution from 20 to 80 substeps per event changes finite-horizon contraction fractions by up to 16 percentage points in the tested cells [Fig.~\ref{fig:resource-comparison}(b)].

Figure~\ref{fig:resource-comparison}(c) compares nonlinear and linearized growth on matched finite windows with small initial deviations. Let \(d_i=x_i-x_1\), \(D(t)=\max_{i>1}\|d_i(t)\|_2\), and let \(\Psi\) be the variational flow along the root trajectory. The prediction is \(d_i^{\rm lin}(t)=e^{-\sigma\int_0^t a_i(s)\,ds}\Psi(t,0)d_i(0)\), where \(a_i(s)\in\{0,1\}\) indicates service to leaf \(i\). All 1920 matched windows are resolved. Each lasts at most five time units, ending when either nonlinear or linearized maximum error leaves \([10^{-10},10^{-3}]\).

\subsection{General support with heterogeneous arrivals}
\label{sec:general-graph-numerics}
We next ask how additional incoming links change the resource bottleneck beyond a tree. Figure~\ref{fig:general-graph} considers a geometric graph with 30 nodes in the unit square and 84 edges of length \(d_e<0.3\). Arrivals satisfy \(p_e=0.9s e^{-d_e/0.15}=s\widehat p_e\), where \(s\in[0.05,1]\) scales the supply. We retain the Lorenz parameters from Section~\ref{sec:lorenz-numerics}.

We compare five policies: greedy service on a maximum weight spanning tree, a breadth first search (BFS) tree with the same root, and a fixed DAG; the frame policy \(\pi_{T,1}\) on the maximum weight tree; and greedy service on all edges with directions set by independent random node orderings at each event. The DAG uses a heuristic. For every possible root, we build an ordering by repeatedly adding the node with greatest total connection weight to the placed set, then orient edges forward. We retain the ordering with the largest minimum incoming supply among nonroot nodes.

Theorem~\ref{thm:resource} gives exact greedy rates of approximately \(5.94s\), \(3.96s\), and \(10.11s\) for the two trees and DAG. The DAG gains about \(70\%\) over the maximum weight tree; unbalanced incoming supplies leave it below the bound \(26.94s\). Random orientation gives rate \(0.36\) at \(s=1\), below empirical Lorenz growth.

Resource calculations on 20 further graphs, generated identically conditional on connectivity, give DAG/tree ratios with median \(1.00\), range \(1.00\)--\(2.35\): nine improve, eleven tie. The all-edge bound/DAG ratio has median \(5.11\), range \(2.10\)--\(7.72\).

Panel (b) uses eight 100000-event runs per point. Panel (c) tests the local linear prediction with four transverse variational ODE runs, using event resets and normalization. Isolated growth \(\widehat\lambda_{1,T}\approx0.92\) is estimated on \([50,T]\), \(T=800\). Subtracting exact network rates predicts crossings \(s\approx0.091,0.154,0.231\) for DAG, maximum weight tree, and BFS; interpolated ODE means cross at \(0.091,0.157,0.237\). The frame and random-orientation lines instead subtract estimated long-product rates. Doubling the horizon changes mean growth by up to \(0.06\).

\section{Discussion}
\label{sec:discussion}
\subsection{From temporal requirements to resource design}
Design begins with the contraction needed to overcome intrinsic growth. After checking the supply bound, seek a fixed acyclic orientation with a unique source for which even the least supplied non-source node meets this requirement under greedy service (Figs.~\ref{fig:resource-comparison} and~\ref{fig:general-graph}). The required contraction depends on the target: isolated growth gives a local linear test, whereas the domain growth bound in Theorem~\ref{thm:nonlinear} gives a uniform nonlinear guarantee under linear coupling. Figure~\ref{fig:lorenz-numerics}(c) illustrates the domain guarantee on a star; Fig.~\ref{fig:general-graph}(c) tests the local linear prediction on a larger graph.

When resources are always ready, the inventory constraint disappears, leaving the temporal synchronization problem.

\subsection{Implications for entanglement assisted networks}
A dynamical application requires services that sustain contraction. Stored resources constrain feasible interactions, unlike a prescribed coupling budget \cite{ZhangStrogatz2021}. Comparing contraction with intrinsic growth parallels finite-data-rate control \cite{Nair2007}, but the constrained quantity here is service frequency rather than bit rate. Even with heterogeneous generation, balancing incoming supplies provides an allocation target.

\subsection{Scope}
The resource results assume independent Bernoulli arrivals, lossless storage of one unit per edge, and simultaneous operation of distinct edges. Aging, contention, and correlated arrivals require new arguments. Changing \(\Delta\) for a fixed generation process also changes \(p_e\).

\section{Conclusion}
\label{sec:conclusion}
We have connected the communication needed for synchronization to the resources available to support it. Resource conservation limits contraction, while the weakest incoming supply among non-source nodes determines the optimal greedy rate on a fixed acyclic orientation with a unique source. Balancing these supplies attains the limit. Local contraction criteria and, under linear coupling, guarantees throughout a convex invariant domain translate entanglement service allocation into stability requirements for classical applications. The next challenges are to account for imperfect or delayed samples and to optimize orientations and policies on general graphs.

\section*{Reproducibility}
The scripts, resource histories, derived tables, and validation records supporting this study are available from the corresponding author upon request.

\clearpage
\appendices
\section{Proofs and local regularity conditions}
\label{app:proofs}
\subsection{Proof of Theorem~\ref{thm:temporal-contraction}}
\label{app:exact-proof}

\begin{proof}
Let \(\xi_i(t)\in\R^d\) be the perturbation of node \(i\), and \(\xi_{i,k}=\xi_i(t_k)\). The predictor perturbation is \(\Psi_s(t,t_k)\xi_{j,k}\). In coordinates \(\zeta_i(t)=\Psi_s(t,t_k)^{-1}\xi_i(t)\), the linearization of \eqref{eq:node-ode} becomes
\[
 \dot\zeta_i=-\sigma b\ell_i\zeta_i
       +\sigma b\sum_j[A_k]_{ij}\xi_{j,k}.
\]
Integration gives \(W\), including its identity rows when \(\ell_i=0\). Thus, with \(\Xi_k=\col(\xi_{1,k},\ldots,\xi_{n,k})\),
\begin{equation}
 \Xi_K=(Q_{0:K}\otimes\Psi_s(t_K,0))\Xi_0.
 \label{eq:exact-event-map}
\end{equation}
On the quotient this is \((U^{\mathsf T}Q_{0:K}U)\otimes\Psi_s\). Compression composes because \(Q\one_n=\one_n\). Multiplicativity of the Euclidean Kronecker norm after conjugation by \(I_{n-1}\otimes P^{1/2}\) proves \eqref{eq:exact-norm}. Taking logarithms proves the first inequality and the stated equality condition.

For \(\diamP(X)=\max_{i,j}\|x_i-x_j\|_P\), the diameter contraction inequality and norm equivalence give
\begin{equation}
 \begin{gathered}
 \|Q\|_\perp\le\sqrt{n-1}\,\delta(Q),\\
 \frac{\diamP(X)}{\sqrt2}\le d_P(X,\mathcal D)
 \le\sqrt{\frac{n-1}{2}}\diamP(X).
 \end{gathered}
 \label{eq:distance-diameter-equivalence}
\end{equation}
Here diameter contraction follows by splitting each difference of two stochastic rows into positive and negative parts with equal total mass; their normalized sums are convex combinations of the node vectors. The distance bounds follow from the arithmetic mean and the pairwise variance identity. Together with \eqref{eq:isolated-growth}, these prove \eqref{eq:general-contraction}. If \(G_N\) counts the good blocks in \eqref{eq:block-frequency}, submultiplicativity gives \(\delta(Q_{0:NL})\le\gamma^{G_N}\). A terminal partial block has coefficient at most one, so \eqref{eq:frequency-bound} follows for \(\gamma>0\). If \(\gamma=0\), one good block makes every later prefix coefficient zero, proving the same statement.
\end{proof}

\subsection{Local regularity and proof of Corollary~\ref{cor:nonlinear-synchronization}}
\label{app:nonlinear-proof}

Let \(\mathcal F_{r,\theta}\) be the nonlinear flow from \(t_{rL}\) to \(t_{rL}+\theta\), for \(0\le\theta\le L\Delta\), and set \(\mathcal F_r=\mathcal F_{r,L\Delta}\). Write \(Z_s(t)=\one_n\otimes s(t)\) and \(\mathcal E_P=\{\col(e_1,\ldots,e_n):\sum_i e_i=0\}\). For \(\varepsilon\ge0\), define
\begin{equation}
 \begin{aligned}
 \mathcal T_r(\varepsilon)=\{Z_s(t_{rL})+E:\;&s\in\mathscr S,\ E\in\mathcal E_P,\\
 &\diamP(E)\le\varepsilon\}\cap\mathcal X^n.
 \end{aligned}
 \label{eq:block-tube}
\end{equation}
Assume a common \(r_0>0\), finite \(H_L\), and function \(\omega_L:[0,r_0]\to[0,\infty)\), with \(\omega_L(\varepsilon)\to0\) as \(\varepsilon\downarrow0\), satisfy the following conditions. All block flows from \(\mathcal T_r(r_0)\) exist in \(\mathcal X^n\), and restarting is allowed:
\begin{equation}
 \begin{gathered}
 X\in\mathcal T_r(r_0),\quad\diamP(\mathcal F_r(X))\le r_0\\
 \Longrightarrow\ \mathcal F_r(X)\in\mathcal T_{r+1}(r_0).
 \end{gathered}
 \label{eq:restart-closure}
\end{equation}
Thus the endpoint mean is represented by an isolated trajectory in \(\mathscr S\) at the next boundary. For \(Z=Z_s(t_{rL})\) and every admissible \(Z+E\in\mathcal T_r(r_0)\), require
\begin{equation}
\begin{split}
 \sup_{0\le\theta\le L\Delta}
 \diamP\big(\mathcal F_{r,\theta}(Z+E)-\mathcal F_{r,\theta}(Z)\big)\\
 \le H_L\diamP(E).
\end{split}
\label{eq:within-block-bound}
\end{equation}
\begin{equation}
\begin{split}
 \diamP\big(\mathcal F_r(Z+E)-\mathcal F_r(Z)-D\mathcal F_r(Z)E\big)\\
 \le\omega_L(\diamP(E))\diamP(E).
\end{split}
\label{eq:uniform-remainder}
\end{equation}
Uniformly bounded first and second derivatives on a common compact invariant neighborhood suffice for these two estimates; restart closure remains a separate requirement.

\begin{proof}[Proof of Corollary~\ref{cor:nonlinear-synchronization}]
Let \(D_r=\diamP(X(t_{rL}))\). Exact factorization bounds the block derivative in diameter by \(\beta_L\delta(B_r)\). For sufficiently small common radius, \eqref{eq:uniform-remainder} therefore gives
\[
 D_{r+1}\le[\beta_L\delta(B_r)+\epsilon]D_r.
\]
For \(0\le\gamma<1\), choose \(0<\nu'<\nu\) sufficiently close to \(\nu\), and then \(\epsilon>0\) sufficiently small, so that
\[
 \nu'\log(\beta_L\gamma+\epsilon)
 +(1-\nu')\log(\beta_L+\epsilon)<0.
\]
This is also possible for \(\gamma=0\), since the first logarithm tends to \(-\infty\) as \(\epsilon\downarrow0\). Condition \eqref{eq:block-frequency} gives \(G_N\ge\nu'N\) eventually. The scalar products
\((\beta_L\gamma+\epsilon)^{G_N}(\beta_L+\epsilon)^{N-G_N}\)
then decay exponentially and have a finite maximum \(C\ge1\), including the empty prefix. Choose the initial diameter below the common radius divided by \(C\). Restart closure and induction keep every endpoint within that radius and justify every recurrence. The initial state must also lie in \(\mathcal T_0(r_0)\). The within-block bound extends the decay to continuous time, and \eqref{eq:distance-diameter-equivalence} gives \eqref{eq:nonlinear-target}.
\end{proof}

\subsection{Proof of Theorem~\ref{thm:nonlinear}}
\label{app:domain-proof}

\begin{proof}
Every predictor initialized in \(\Omega\) stays there by \eqref{eq:invariance-assumption}. Convexity of \(\mathcal V\) gives
\(\nabla\mathcal V(x_i)^{\mathsf T}(\widehat y-x_i)\le v_*-\mathcal V(x_i)\).
Consequently, throughout an interval of existence in \(\mathcal X\),
\[
 \frac{d}{dt}(\mathcal V(x_i)-v_*)
 \le-(a+\sigma b\ell_{i,k})(\mathcal V(x_i)-v_*).
\]
Scalar comparison excludes an exit from \(\Omega\). At each event all new predictors start from states in \(\Omega\). Induction, compactness within the open domain, and positive event spacing imply invariance and forward completeness.

The fixed DAG source receives no input, so \(s=x_o\) is isolated. Put \(d_i(t)=\|x_i(t)-s(t)\|_P\), \(d(k)=\col(d_i(t_k))\), and \(D(t)=\max_i d_i(t)\). The one-sided bound implies
\(\|\widehat y_{j\to i,k}(t_k+\theta)-s(t_k+\theta)\|_P
\le e^{\alpha_\Omega\theta}d_j(t_k)\).
This uses the sampled state, even if the sender simultaneously receives input. The upper right Dini derivative satisfies
\[
 D^+d_i\le(\alpha_\Omega-\sigma b\ell_{i,k})d_i
 +\sigma b e^{\alpha_\Omega\theta}\sum_j[A_k]_{ij}d_j(t_k).
\]
At \(d_i=0\) this follows directly from the upper right norm derivative. The sum permits multiple incoming services. Scalar integration, including idle rows, yields
\begin{equation}
 d(t_k+\theta)\le e^{\alpha_\Omega\theta}
 W(A_k;\sigma b\theta)d(k),\qquad0\le\theta\le\Delta.
 \label{eq:continuous-comparison}
\end{equation}
Nonnegativity permits iteration: \(d(K)\le e^{\alpha_\Omega K\Delta}Q_{0:K}d(0)\), with \(d_o=0\).

Order the source first, and let \(\overline Q_{0:K}\) be the principal block of \(Q_{0:K}\) after deleting its source row and column. Write \(\mathbf e_i\) for coordinate vector \(i\) in \(\R^n\). Its norm exponent is \(\Lambda_Q\): if \(F_o\) has rows \(\mathbf e_i^{\mathsf T}-\mathbf e_o^{\mathsf T}\), \(i\ne o\), then \(S_o=F_oU\) is invertible and
\(\overline Q_{0:K}=S_o(U^{\mathsf T}Q_{0:K}U)S_o^{-1}\).
This identity follows from the fixed source row \(\mathbf e_o^{\mathsf T}\). Let \(\|\cdot\|_\infty\) denote the maximum absolute row sum. Fixed norm equivalence and \eqref{eq:nonlinear-rate-condition} imply
\[
 C_0=\sup_{K\ge0}e^{(\alpha_\Omega+\lambda)K\Delta}
 \|\overline Q_{0:K}\|_\infty<\infty.
\]
The iterated comparison bounds \(D(t_K)\) by \(C_0e^{-\lambda t_K}D(0)\). Row stochasticity in \eqref{eq:continuous-comparison} extends this to continuous time with the extra factor \(e^{\max\{0,\alpha_\Omega+\lambda\}\Delta}\).
Finally \(d_P(X,\mathcal D)\le\sqrt{n-1}D\) and \(D\le\sqrt2d_P(X,\mathcal D)\), using respectively the source and the arithmetic mean as reference. The resulting constant is finite and independent of \(X(0)\in\Omega^n\), proving the claim.
\end{proof}

\emph{Proof of the necessary test \eqref{eq:equilibrium-necessity}.}
Perturb \(\one_n\otimes z_*\) by \(\varepsilon v\) in any transverse direction \(v\), admissible for small \(\varepsilon\). For fixed \(K\), the event flow is \(C^1\). Divide the uniform estimate by \(|\varepsilon|\), let \(\varepsilon\to0\), and apply exact factorization:
\[
 \|Q_{0:K}\|_\perp\|e^{K\Delta Df(z_*)}\|_P
 \le C_{\boldsymbol A,\lambda}e^{-\lambda K\Delta}.
\]
The normalized logarithm of the matrix exponential norm tends to \(a_*\), proving the test.\hfill\(\square\)

\subsection{Proof of the resource theorem}
\label{app:resource-proof}
On one probability-one event the strong law holds on every edge:
\(\sum_{k=1}^K g_{e,k}/K\to p_e\).
Fix a realization in this event. Summing \eqref{eq:inventory-update} gives, for every feasible schedule,
\begin{equation}
S_e(K)\le q_{e,0}+\sum_{k=1}^K g_{e,k}-q_{e,K}
\le1+\sum_{k=1}^K g_{e,k}.
\label{eq:resource-conservation}
\end{equation}
This pathwise bound avoids taking probability intersections over policies.

If event \(k\) has \(N_k=\sum_{i,j}[A_k]_{ij}\) services and is acyclic, a topological ordering makes \(W(A_k;\kappa)\) triangular with determinant \(e^{-\kappa N_k}\). The ordering may change with \(k\). In the basis \([\one_n/\sqrt n,U]\), any stochastic product has blocks \(\left(\begin{smallmatrix}1&*\\0&U^{\mathsf T}QU\end{smallmatrix}\right)\). Its quotient determinant is therefore \(\det Q\). The largest singular value is at least their geometric mean, giving
\begin{equation}
\|Q_{0:K}\|_\perp\ge
\exp\left[-\frac{\kappa}{n-1}\sum_{e\in F}S_e(K)\right].
\label{eq:finite-resource}
\end{equation}
Conservation proves the lower limit bound in (i). Combining this finite bound with the exact norm identity also proves \eqref{eq:necessary-resource}, without assuming either exponent has a limit.

For (ii), place \(D\) in a topological ordering with its unique source \(o\) first. Put \(m=n-1\). The root row of every \(W_k\) is its identity row. Subtracting the root coordinate from the other coordinates therefore gives a nonnegative lower triangular matrix \(H_k\), equal to the submatrix of \(W_k\) obtained by deleting the root row and column. Its diagonal is
\[
[H_k]_{ii}=e^{-\kappa\ell_{i,k}},\qquad
\ell_{i,k}=S_i(k+1)-S_i(k).
\]
All its entries lie in \([0,1]\). For \(H_{0:K}=H_{K-1}\cdots H_0\), the fixed root row gives the same similarity \(H_{0:K}=S_o(U^{\mathsf T}Q_{0:K}U)S_o^{-1}\) as in Appendix~\ref{app:domain-proof}. Thus these product norms differ by fixed multiplicative factors.

Suppose the incoming frequencies \(R_i\) exist. Set \(R_*=\min_{i\ne o}R_i\) and
\(E_K=\max_{i\ne o}\max_{0\le t\le K}|S_i(t)-R_it|\), with integer \(t\).
Frequency convergence implies \(E_K=o(K)\): a finite prefix has bounded error, and every later error is at most \(\varepsilon t\) for any fixed \(\varepsilon>0\). The diagonal entries of \(H_{0:K}\) are \(e^{-\kappa S_i(K)}\), giving the lower rate \(-\kappa R_*\) per event. For the upper rate, expand each entry of the product. A nonzero term follows a path of nonroot vertices in \(D\), with at most \(m-1\) strictly increasing index moves and at most \(m\) diagonal intervals. Those intervals total at least \(K-(m-1)\); an interval \([a,b)\) at node \(i\) contributes a count
\[
S_i(b)-S_i(a)\ge R_*(b-a)-2E_K.
\]
Move factors are at most one. The number of vertex paths is finite, and for each path the move times have at most \((K+1)^{m-1}\) choices. Consequently a constant \(C_D\ge1\), independent of \(K\), bounds the number of paths and the conversion to the matrix norm, giving
\begin{equation}
\|H_{0:K}\|_2\le C_D(K+1)^{m-1}
e^{-\kappa R_*K+\kappa R_*(m-1)+2\kappa mE_K}.
\label{eq:dag-upper-proof}
\end{equation}
Taking normalized logarithms gives the matching upper rate, including \(m=1\) and \(R_*=0\). Similarity and \(\kappa/\Delta=\sigma b\) prove \eqref{eq:dag-rate}. On an outward tree, each nonroot node has one incoming edge, so \(R_i=r_{e_i}\), giving the tree specialization.

Under greedy on \(D\), \(s_{e,k}=q_{e,k}\), so \(q_{e,k+1}=g_{e,k+1}\) and
\(S_e(K)=q_{e,0}+\sum_{k=1}^{K-1}g_{e,k}\).
Thus \(S_e(K)/K\to p_e\), giving \eqref{eq:greedy-dag}. For any competitor using the same orientation, choose \(i_*\ne o\) with minimal incoming sum \(\sum_{e\in\operatorname{in}_D(i_*)}p_e\). The product diagonal and fixed similarity imply
\(\|Q_{0:K}\|_\perp\ge C^{-1}e^{-\kappa S_{i_*}(K)}\)
for a fixed \(C>0\). Summing conservation over the incoming edges bounds the competitor's lower limiting exponent from below by \(-\sigma b\sum_{e\in\operatorname{in}_D(i_*)}p_e\), without frequency limits. Greedy attains this value, proving optimality on \(D\).

For the attainment criterion, each allowed edge has exactly one receiver and the source has no incoming edge, so \(\sum_{i\ne o}w_i=\sum_{e\in F}p_e\). The minimum of these \(n-1\) nonnegative numbers equals their average if and only if all are equal. Equations~\eqref{eq:resource-bound} and~\eqref{eq:greedy-dag} then give the stated necessary and sufficient criterion for greedy attainment of the total supply bound.

Finally, any directional service on an undirected tree is acyclic: the tree excludes longer cycles and \eqref{eq:feasible-action} excludes a cycle of length two. With common \(p\), (i) becomes \(\Lambda_Q\ge-\sigma bp\). Greedy attains equality, proving optimality against all directional schedules on that allowed tree.
\hfill\(\square\)

\end{document}